\documentclass{article}
\usepackage{graphicx} 
\usepackage{subcaption}
\usepackage{url}
\usepackage{xargs} 
\usepackage{comment}
\usepackage[linesnumbered,ruled,vlined]{algorithm2e}
\usepackage[colorinlistoftodos,prependcaption,textsize=tiny]{todonotes}
\usepackage{amssymb}
\usepackage{amsmath}
\usepackage{authblk}
\usepackage{hyperref}
\usepackage{amsthm}
\usepackage{soul}
\usepackage{float}
\usepackage{pgfplots}
\usepackage{booktabs,tabularx}
\pgfplotsset{compat=1.18} 

\newtheorem{theorem}{Theorem}
\newtheorem{lemma}{Lemma}
\newtheorem{definition}{Definition}

\title{SonicDB S6: A Storage-Efficient Verkle Trie for High-Throughput Blockchains}

\date{\vspace{-5ex}}

\begin{document}

\author[]{Luigi Crisci}
\author[]{Lorenz Sch\"uler}
\author[]{Herbert Jordan}
\author[]{Bernhard Scholz}
\affil[]{Sonic Labs}

\maketitle
\begin{abstract}
    The Ethereum state database uses  Merkle Patricia Trie (MPT), which suffers from large witness proof sizes and high storage overhead.
    Verkle Tries have been proposed as a replacement, offering witness proofs below 150 bytes through vector commitments and Inner Product Argument aggregation. However, deploying a Verkle Trie in a high-throughput, short block-time blockchain such as Sonic, which produces a block every 300 milliseconds, introduces substantial engineering challenges related to storage efficiency, commitment computation costs, and the need to serve both live and historical state queries in real time.

    We present SonicDB S6, a production-grade Rust Verkle Trie database for the Sonic blockchain, which leverages its non-forking property to enable aggressive storage optimizations. Occupancy-aware node specializations, selected via an $\mathcal{O}(k n^2)$ dynamic program, reduce live storage by 97.8\%. Delta nodes that record only changed slots reduce archive storage by 95\%. Batched updates, multi-threaded commitment computation, and homomorphic Pedersen caching yield $3.2\times$ higher throughput than a persistent Geth Verkle baseline while sustaining production block-rate performance.
\end{abstract}

\section{Introduction}

The Ethereum state layer stores balances, nonces, byte-code, and storage slots of accounts and is implemented as a Merkle Patricia Trie (MPT)~\cite{merkle1988digital,morrison1968patricia,wood2014ethereum} that
suffers several limitations.
Verkle Tries~\cite{verkle-eip,VTMPTbenchmarking,buterin2021verkle} have been proposed as a drop-in replacement for the Merkle Patricia Trie (MPT)~\cite{merkle1988digital,morrison1968patricia,wood2014ethereum}.
Verkle Tries replace cryptographic hashing at each node of the Trie with vector commitments~\cite{vector-commitments}, thereby allowing commitment to large amounts of data with a single group element.
The key benefit is that they support \textit{position-binding openings}~\cite{vector-commitments}, resulting in witness proofs~\cite{storage-proof} of an order pair without supplying any sibling data.
Witness proofs in the Merkle Patricia Trie require supplying all sibling nodes along the root-to-leaf path, resulting in witness proofs of on the order of several kilobytes per key-value pair.
At block scale, the aggregate witness~\cite{ethereum-stateless} for a single Ethereum block can reach tens of megabytes, making \textit{stateless clients}~\cite{ethereum-stateless} impractical.
Network nodes that verify blocks without retaining the full state cannot verify them as efficiently when witness data is this large.
In addition, the trie's hexadecimal branching factor and variable-length encoding lead to deep, unbalanced structures with high read and write amplification~\cite{readamp}, placing mounting pressure on storage as state grows.

Ethereum Verkle Tries overcome the shortcomings of MPT's witness proofs using Inner Product Argument (IPA) proof aggregation~\cite{bootle2016efficient,bulletproofs,halo}, which reduces proof witness sizes from several kilobytes to under 150 bytes regardless of the Trie's depth, enabling technologies such as stateless clients in the future.
The higher branching factor of 256 also limits tree depth to at most 32 levels for 32-byte keys and unifies account and contract storage into a single key space, eliminating the per-contract storage tries of the MPT design. The advantages of Verkle's proof witnesses have been experimentally demonstrated in~\cite {VTMPTbenchmarking}.

Despite their theoretical appeal, building a Verkle Trie database suitable for production use in a high-throughput blockchain with very short time-to-finality, such as Sonic~\cite{soniclabs}, introduces substantial engineering challenges.
For example, SonicDB is developed for the Sonic blockchain, which produces a block every 300 milliseconds — 40 times faster than Ethereum's 12-second block interval.
Every commitment recomputation, node write, and storage allocation must complete within this tight budget.

In addition, Verkle nodes nominally store 256 values or child pointers each. While the high branching factor keeps the Trie shallow, it also means that nodes are frequently sparsely populated, particularly at lower levels.
A leaf node storing a single 32-byte value would still occupy more than 8~KiB if stored at full size.
Storing all 256 slots regardless of occupancy results in enormous waste; designing a storage structure that adapts to actual usage while keeping lookup overhead low is therefore paramount.
Another key challenge in any Verkle Trie implementation is the cost of recomputing the state root commitment after each block.
A Pedersen commitment requires elliptic-curve scalar multiplication, which is roughly two to three orders of magnitude slower than the hash evaluations used in MPTs.
Incremental updates mitigate this cost during ordinary state transitions, but the state root must be finalized within the block production window, leaving little room for unoptimized computation.

Our Verkle Trie uses two distinct databases serving fundamentally different roles. The \textit{LiveDB} maintains the current chain state for validators and observers, requiring low-latency reads and writes while retaining no history. The \textit{ArchiveDB} supports historical queries by preserving the full sequence of past states and requires non-destructive, copy-on-write updates. Both databases must remain synchronized in real time: if the ArchiveDB falls behind the LiveDB, it becomes a bottleneck for the entire system. Serving both roles efficiently with a single underlying Verkle trie design requires careful architectural separation.

The Ethereum committee phased out the development of Verkle Tries in accordance with the NIST recommendation to deprecate elliptic curve cryptography~\cite{ecc-phase-out}, and pivoted to a new hash-based state tree design~\cite{eip7864}.
However, the Sonic chain characteristics justify exploring Verkle Tries as a state database. In particular, the higher storage footprint of the Sonic chain makes zero-knowledge proof generation for stateless clients more appealing.

This paper presents SonicDB S6, a Rust implementation of an EIP-6800-compatible Verkle Trie state database for the Sonic blockchain.
SonicDB exploits the fact that Sonic is a non-forking blockchain, which eliminates the need to maintain diverging versions of the database at the same block height and enables more aggressive storage optimizations than would be possible in a general-purpose setting. SonicDB S6 makes the following contributions:
\begin{itemize}
    \item We introduce occupancy-aware node specializations with a subsumption property and optimal selection via dynamic programming, reducing the LiveDB storage by 97.8\% (from 817.1 GiB to 17.9 GiB) across 40 M blocks.
    \item  We propose delta nodes that store only changed slots relative to a base node while avoiding chaining, which together with specialization reduces storage of the ArchiveDB by 95\% (from 16,141 GiB to 809 GiB).
    \item We present the SonicDB S6 architecture with batched updates, multi-threaded commitment computation, and cached Pedersen commitment state, achieving 2.85× higher throughput than a persistent Geth Verkle baseline while keeping archive performance at production block rate.
\end{itemize}
The remainder of this paper is organized as follows.
Section~\ref{sec:background} provides background on Ethereum Verkle Tries, their structure, and their proof-size advantages over MPTs.
Section~\ref{sec:live-archive} describes our forkless live and archive database design, including delta nodes.
Section~\ref{sec:specialization} presents the node specialization framework and its dynamic programming solution. Section~\ref{sec:implementation} details the implementation layers and commitment computation optimizations.
Section~\ref{sec:evaluation} evaluates storage efficiency and throughput against the Geth baseline.
Section~\ref{sec:related} surveys related work, and Section~\ref{sec:conclusion} concludes with directions for future work.

\section{Background: Verkle Tries \label{sec:background}}

In the following section, we provide an overview of important concepts related to Verkle Tries, following the Ethereum specification \cite{verkle-eip}.

\paragraph{Verkle Commitments.}
A \textit{commitment scheme} is a cryptographic primitive that allows one party (the committer) to bind themselves to a chosen value without revealing it to a second party (the verifier).
\textit{Vector commitments}~\cite{vector-commitments} extend this primitive to sequences of values $(v_1, v_2, \ldots, v_n)$ rather than a single message.
In addition to the standard binding and hiding properties, vector commitments support \textit{position-binding openings}: the committer can prove that $v_i$ occupies position $i$ in the committed sequence without disclosing the remaining values.
This is the crucial property that enables compact proofs in Verkle tries, since a prover can certify a single key-value pair by producing an opening for that leaf's position in the commitment, rather than hashing all sibling nodes as in a Merkle proof.
Vector commitments also offer an \textit{updatability} property: if $v_i$ is replaced by $v_i'$, the commitment can be updated incrementally without recomputing the cryptographic hashes from scratch.

Ethereum's Verkle trie proposal uses the \textit{Pedersen commitment
    scheme}~\cite{pedersen1992non}, combined with the \textit{Inner Product Argument}
(IPA)~\cite{bootle2016efficient,bulletproofs} for efficient position-binding
openings~\cite{vector-commitments}.
The Pedersen scheme is based on elliptic curve cryptography and derives its security from the \textit{discrete logarithm assumption}: given a prime-order group $\mathbb{G}$ of order $p$ and two independently chosen generators $g_1, g_2 \in \mathbb{G}$ whose discrete logarithm relationship is unknown, it is computationally infeasible to find $x$ such that $g_1 = x \cdot g_2$.
Under this assumption, a Pedersen commitment to a pair of field elements $v_1, v_2 \in \mathbb{F}_p$ is $C = v_1 g_1 + v_2 g_2$, which is a single group element. The scheme generalizes to arbitrary-length vectors: given a set of generators $g_1, \ldots, g_n$ with no known discrete logarithm relationships, a commitment to $(v_1, \ldots, v_n)$ is $C = \sum_{i=1}^{n} v_i \, g_i$.
A key algebraic property of Pedersen commitments is \textit{additive homomorphism}: for two value vectors $V_0$ and $V_1$ with respective commitments $C_0$ and $C_1$,
\begin{equation*}
    C_0 + C_1
    = \operatorname{Commit}(V_0) + \operatorname{Commit}(V_1)
    = \operatorname{Commit}(V_0 + V_1).
\end{equation*}
This property is what makes IPA-based openings efficient: rather than revealing the full committed vector, a prover can convince a verifier of an inner product relation through a logarithmic-round interactive protocol, which translates into a compact non-interactive proof via the Fiat--Shamir heuristic~\cite{fiat1986prove}.

In the Ethereum proposal, Pedersen commitments are instantiated over the \textit{Bandersnatch} elliptic curve~\cite{masson2024bandersnatch}.
Bandersnatch is a twisted Edwards curve defined over the BLS12-381 scalar field, chosen because its native field arithmetic aligns with BLS12-381-based SNARK circuits, making proof verification highly efficient inside zero-knowledge proofs.
Importantly, the Bandersnatch group order $p$ is a 253-bit prime, which means the curve can natively commit to values of at most 252 bits.
Values of 256 bits (i.e.\ 32 bytes), as used throughout the Ethereum state, therefore exceed the group order and must be handled with care, as described in the extension node encoding below.

It is worth noting that computing a Pedersen commitment is considerably more expensive than computing a cryptographic hash.
A commitment requires $n$ scalar multiplications on an elliptic curve, each of which is roughly two to three orders of magnitude slower than a SHA-256 evaluation.
Incremental updates mitigate this cost during ordinary state transitions, but initial Trie construction and proof generation remain computationally expensive.

\paragraph{Node Structure.}
\begin{figure}[htbp]
    \centering
    \includegraphics[width=0.8\textwidth]{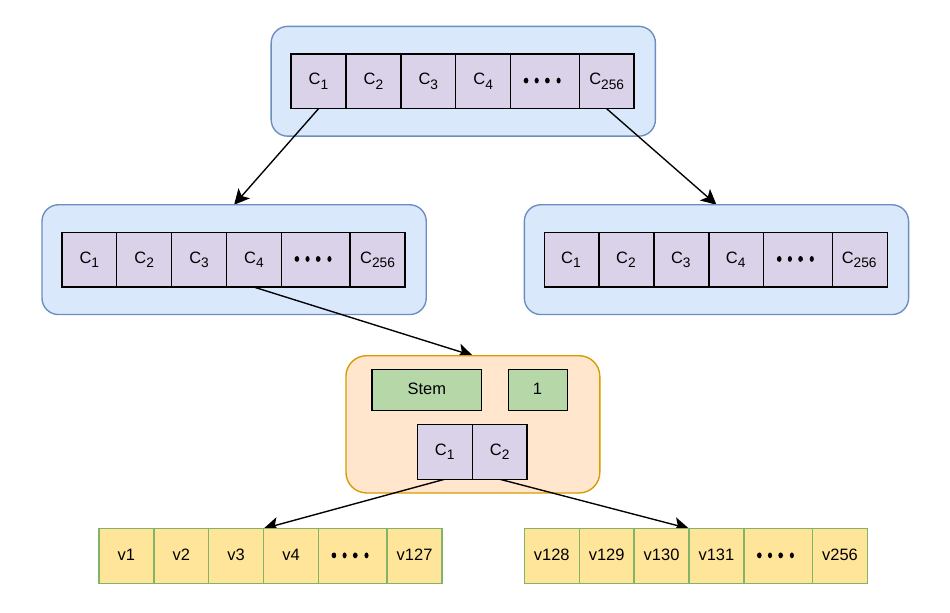}
    \caption{Ethereum Verkle Trie structure.}
    \label{fig:verkle-trie-composition}
\end{figure}
The Ethereum Verkle Trie is composed of two types of nodes: \textit{inner} nodes and \textit{extension} nodes, with a branching factor (arity) of 256.
Figure~\ref{fig:verkle-trie-composition} illustrates an example instance of a Verkle Trie.
Inner nodes (shown in blue in Figure~\ref{fig:verkle-trie-composition}) act as the internal branching structure of the trie.
Each inner node has up to 256 children, which may themselves be either inner nodes or extension nodes.
The commitment of an inner node is the Pedersen commitment of its children's commitments:
\begin{equation}
    C_{\mathrm{inner}} = \operatorname{Commit}(C_1, C_2, \ldots, C_{256}),
\end{equation}
where $C_i$ denotes the commitment of the child at position $i$.
This recursive structure allows any sub-tree to be summarized by a single group element, enabling constant-size proofs regardless of Trie depth.
Extension nodes (shown in orange in Figure~\ref{fig:verkle-trie-composition}) form the leaves of the Trie and store up to 256 key-value pairs, indexed by the final byte of the 32-byte key.
The first 31 bytes of the key, shared across all values in the node, are explicitly stored as the node's \textit{stem}, permitting early termination of lookups when no matching prefix exists.

Because Bandersnatch has a 253-bit group order---smaller than the 256-bit range of Ethereum state values---extension nodes cannot commit to 32-byte values directly~\cite{verkle-eip}.
To resolve this, each value $v_i \in \mathbb{B}_{32}$ is split into two 16-byte halves: a lower half $v_i^{l} \in \mathbb{B}_{16}$ and an upper half $v_i^{h} \in \mathbb{B}_{16}$.
Both halves fit comfortably within the curve's 252-bit capacity.
To distinguish between a leaf that does not exist and one that is explicitly set to zero, a marker bit is embedded in the lower half: the 129th bit of $v_i^{l}$ is set to 1 if the value is present and 0 if absent, yielding the modified lower half $v_i^{l,\mathrm{mod}}$.
This encoding is required to prevent an adversary from deleting state by writing zero values.
The 256 modified value pairs are then committed as two intermediate Pedersen commitments, each covering half the leaf range:

\begin{align}
    C_1 & = \operatorname{Commit}
    \bigl(v_{0}^{l,\mathrm{mod}},\, v_{0}^{h},\,
    v_{1}^{l,\mathrm{mod}},\, v_{1}^{h},\,\ldots,\,
    v_{127}^{l,\mathrm{mod}},\, v_{127}^{h}\bigr), \\
    C_2 & = \operatorname{Commit}
    \bigl(v_{128}^{l,\mathrm{mod}},\, v_{128}^{h},\,
    v_{129}^{l,\mathrm{mod}},\, v_{129}^{h},\,\ldots,\,
    v_{255}^{l,\mathrm{mod}},\, v_{255}^{h}\bigr).
\end{align}

Finally, the commitment of the extension node as a whole incorporates a version marker (currently set to 1 to facilitate future state-expiry schemes~\cite{ethereum-state-expiry}), the stem, and both sub-commitments:
\begin{equation}
    \label{eq:leaf-commitment}
    C_{\mathrm{ext}} = \operatorname{Commit}(1,\; \mathrm{Stem},\; C_1,\; C_2).
\end{equation}

\paragraph{Witness Proof}
A \textit{witness proof} (or \textit{membership proof}) is a cryptographic proof that a given key-value pair $(k, v)$ is included in the Trie, without requiring the verifier to hold the entire Trie structure.
Witness proofs are essential for \textit{stateless clients}~\cite{ethereum-stateless}: nodes that verify blocks without storing the full blockchain state.

In an MPT, the integrity of each node is guaranteed by its parent storing the cryptographic hash of that node's contents.
To prove that a key-value pair $(k, v)$ is present, a prover must supply a \textit{Merkle proof}: the sequence of all sibling nodes encountered on the root-to-leaf path.
The verifier recomputes each node's hash from the leaf upward, checking at every level that the computed hash matches the value recorded in the parent, until the root hash is reconstructed and compared against a trusted value.
At each level, the branching factor $b$ determines the number of siblings, each contributing one hash (32 bytes for Keccak-256) to the proof.
With $b = 16$ and depth $d \approx 8$--$9$ (since $16^8 \approx 4 \times 10^9$), each level contributes up to $(b-1) \times 32 = 15 \times 32 = 480$ bytes, yielding witness sizes on the order of several kilobytes per key-value pair.

Verkle Tries replace hashing with \textit{vector commitments}, which support position-binding openings.
A prover can reveal a single position $v_i$ without exposing other values, using an IPA proof~\cite{bootle2016efficient,bulletproofs} of size logarithmic in the branching factor.
Moreover, IPA proofs across levels can be \textit{aggregated}~\cite{halo}, so a full root-to-leaf witness fits in roughly 100--150 bytes regardless of Trie size.

Two effects compound this improvement: the higher branching factor halves tree depth (since $\log_{256} N = \tfrac{1}{2}\log_{16} N$), and position-binding openings contribute a \textit{constant} per level rather than $\mathcal{O}(b)$ hashes.
Together, they reduce witness sizes by roughly two orders of magnitude.

\begin{table}[ht!]
    \centering
    \begin{tabular}{lcc}
        \toprule
        \textbf{Property}       & \textbf{MPT}   & \textbf{Verkle Trie}     \\
        \midrule
        Branching factor $b$    & 16             & 256                      \\
        Proof element per level & $(b-1)$ hashes & 1 IPA proof (aggregated) \\
        Bytes per level         & $\sim$480      & amortized $\sim$0--48    \\
        Depth for $10^9$ leaves & $\sim$8        & $\sim$4                  \\
        Witness size            & $\sim$3--4 KiB & $<$150 bytes             \\
        \bottomrule
    \end{tabular}
    \caption{Witness size comparison between Merkle Patricia Trie (MPT)~\cite{merkle1988digital,morrison1968patricia,wood2014ethereum}  and Verkle Trie~\cite{buterin2021verkle} for a trie
        holding approximately $10^9$ key-value pairs~\cite{VTMPTbenchmarking}.}
    \label{tab:witness-size}
\end{table}

\section{Dual-Mode Storage: LiveDB and ArchiveDB%
  \label{sec:live-archive}}

Blockchain state management imposes two fundamentally different access requirements on the underlying database.
Validators and observers require low-latency reads and writes against the \emph{current} world state: given the state at block $h$, they must produce the state at block $h+1$ with minimal latency, and do not need to retain any prior state of block height less than $h$.
Archive nodes, by contrast, must respond to historical queries at arbitrary block heights, which demands that every state transition be preserved non-destructively and remain efficiently retrievable long after it occurred.

In most blockchain implementations, both roles are served by a single general-purpose database, which forces an architectural compromise.
Retaining the full history of state transitions introduces storage and write overhead that impairs validator performance, while the low-latency write patterns suited to block production are ill-adapted to the append-only workloads characteristic of historical access.

Another complicating aspect of general-purpose designs is the need to support chain reorganizations, which require the database to maintain divergent state versions at the same block height until a single branch emerges as the chain's canonical branch.
Sonic, in common with a class of modern consensus protocols, provides a \emph{non-forking} guarantee: each block has at most one canonical successor, so divergent versions never arise.
This property removes the reorganization requirement entirely, and with it the need for the version-management and pruning machinery that general-purpose designs must carry.

We exploit this guarantee by decomposing the state layer into two specialized databases, LiveDB and ArchiveDB~\cite{11275644,jordan2025fastethereumcompatibleforklessdatabase}, each designed exclusively for its access pattern and each capable of applying optimizations that would be unsafe or inapplicable in a forking setting.

\begin{figure}[htbp]
    \centering
    \includegraphics[width=\textwidth]{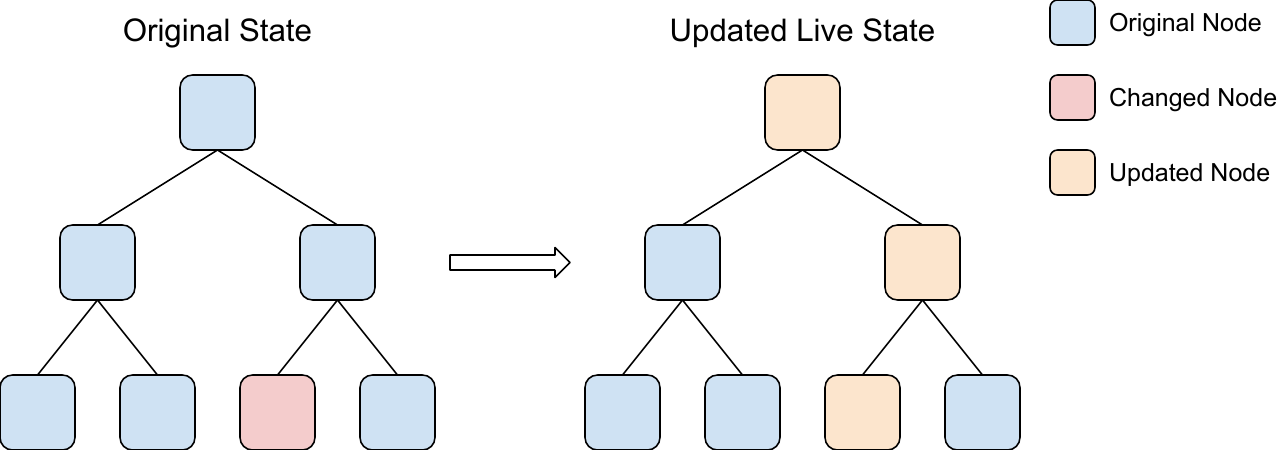}
    \caption{LiveDB update of a single leaf node. Because no historical data
        is retained, changes are applied in place.}
    \label{fig:live-update}
\end{figure}

\paragraph{LiveDB.} The LiveDB serves the validator and observer roles.
It maintains a single mutable image of the current world state and advances it from block $h$ to block $h+1$ via destructive in-place updates: modified values overwrite their predecessors directly, and no copy of the superseded state is retained.
Consequently, storage reclamation occurs intrinsically as part of the update path, with no separate compaction or pruning pass required.

Figure~\ref{fig:live-update} illustrates a leaf-node update in the LiveDB.
The absence of any versioning constraint means that nodes may be freely transformed between structural variants or deleted as block processing demands, without regard for any reader that might require an earlier version of the same node.
This freedom enables occupancy-aware node specialization described in the next section.
Every node can be held in whichever specialization minimizes its current storage footprint, and re-specialized whenever its occupancy changes, because the LiveDB carries no historical obligation.

\begin{figure}[htbp]
    \centering
    \includegraphics[width=\textwidth]{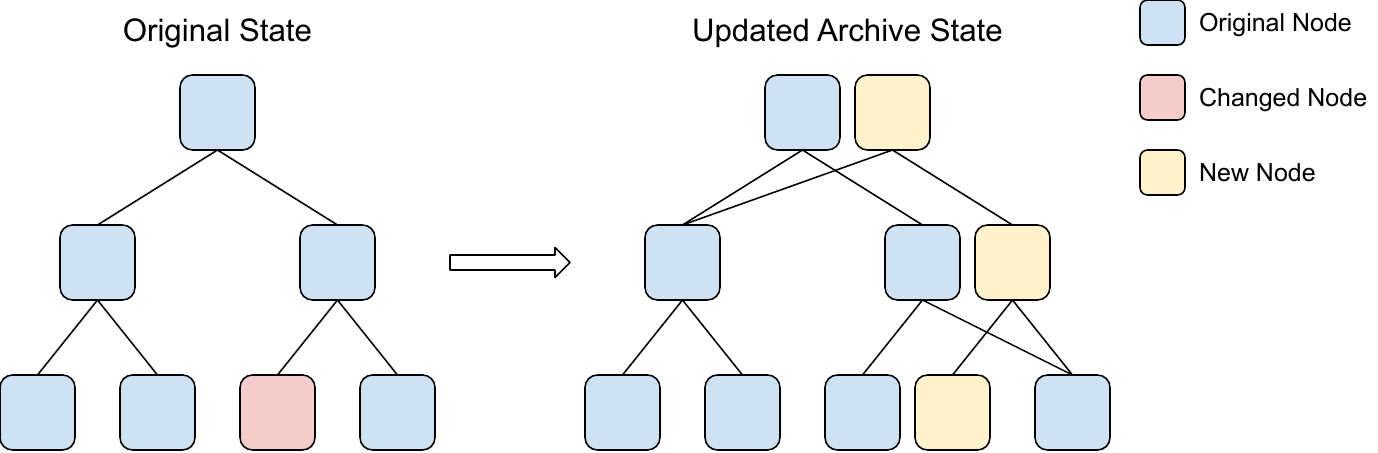}
    \caption{ArchiveDB update of a single leaf node. The changed node is
        copied and updated, and every ancestor up to the root is
        duplicated and updated with its new child identifier.}
    \label{fig:archive-update}
\end{figure}

\paragraph{ArchiveDB.} The ArchiveDB serves the archive node role.
It maintains a complete and queryable record of the world state at every block height through an append-only, copy-on-write update mechanism.
When a node is modified, a new clone of the node is produced with a fresh identifier, and the modification propagates up the path to the root, where each ancestor is likewise copied and updated with the new child identifier.
Nodes that are not touched by a given block are shared across versions without duplication, so the marginal storage cost of a block is proportional to the number of nodes on the modified root-to-leaf paths rather than to the size of the full Trie.

Figure~\ref{fig:archive-update} illustrates this process for a single leaf update.
The non-forking guarantee is essential here.
The ArchiveDB never forks the version graph, and the copy-on-write chain forms a simple linear sequence, because each block has at most one successor.
This eliminates the version-branching and garbage-collection complexity that would otherwise be required, and allows the archive to be structured as a compact, append-only log rather than a general persistent data structure.

The principal cost of the copy-on-write scheme is storage amplification~\cite{readamp}.
A single-slot update to a leaf node requires copying every node on the root-to-leaf path, and upper-level inner nodes, which are typically dense and are touched by nearly every block, are therefore duplicated at high frequency even though only a small fraction of their 256 child identifiers change per update.
We address this amplification directly through the delta node mechanism.

Delta nodes reduce the storage overhead of the ArchiveDB by recording only the slots that changed relative to a recent base node, rather than duplicating the full 256-slot layout on every copy-on-write operation.
The idea of associating a small set of modifications with a base representation, and promoting to a fresh base once the modification budget is exceeded, has precedent in the fat-node-with-modification-box technique of Driscoll et~al.~\cite{driscoll1989persistent} and in the delta-record-plus-consolidation design of the Bw-tree~\cite{levandoski2013bwtree}.
Under copy-on-write semantics, every block update to a leaf node forces a fresh copy of all inner nodes on the root-to-leaf path.
Upper-level inner nodes are therefore rewritten on almost every block, yet typically only a small number of their 256 child identifiers change.
Without delta compression, the storage cost per block update to a single slot is $\Theta(d \cdot W)$, where $d$ is the depth of the modified node and $W = 256$ is the node width.
For a fully populated inner node near the root, $W \cdot 32$ bytes of child identifiers are copied even though all but a constant number are unchanged.

Let $N_b$ be a \emph{base node} at block height $h_b$, storing the full array $V_b \in \mathbb{B}_{32}^{256}$ of child identifiers (for an inner node) or values (for a leaf node).
A \emph{delta node} $N_\delta$ at block height $h_\delta > h_b$ is a pair
\[
    N_\delta = (\mathit{id}(N_b),\; \Delta),
\]
where $\Delta = \{(i_1, v_1), \ldots, (i_m, v_m)\}$ is the set of $m$ changed slots, with $i_j \in [0, 255]$ distinct and $v_j \in \mathbb{B}_{32}$. The logical content of $N_\delta$ is the array $V_\delta$ defined by
\[
    V_\delta[i] =
    \begin{cases}
        v_j    & \text{if } (i, v_j) \in \Delta, \\
        V_b[i] & \text{otherwise.}
    \end{cases}
\]
A \emph{delta threshold} $\tau \in [1, 256]$ governs when a delta node is promoted to a base node: if applying a further update would cause $|\Delta|$ to exceed $\tau$, a new base node is written instead, and subsequent deltas reference it.
Experiments have shown that $\tau = 10$ produces good compression ratios.
Figure \ref{fig:delta-node-conversion} shows the conversion between a base node and a delta node.

Every delta node references a base node directly and never references other delta nodes.
This is enforced by the promotion rule: once $|\Delta|$ would exceed $\tau$, a full base node is materialized.
The invariant guarantees that reading the logical content of any historical node requires exactly two
node loads (one for the delta and one for its base), giving a read amplification factor of two, independent of block height or update frequency.
Without this invariant, a chain of $c$ deltas would require $c+1$ loads, making the read cost unbounded as the chain's length increases.

Given an existing delta node $N_\delta = (\mathit{id}(N_b), \Delta)$ and an incoming update set $U = \{(i, v)\}$, define the merged change set $\Delta' = (\Delta \setminus \{(i, \cdot) : i \in \pi_1(U)\}) \cup U$, where $\pi_1$ projects onto slot indices.
If $|\Delta'| \leq \tau$, a new delta node $(\mathit{id}(N_b), \Delta')$ is written. If $|\Delta'| > \tau$, a new base node is materialized from $V_\delta$ with $U$ applied, and future delta nodes will reference it.
Critically, the new delta always references $N_b$, not the previous $N_\delta$: this replaces the old delta entry in storage rather than appending to a chain.

The on-disk representation of a delta node occupies
\[
    S_\delta(m) = S_{\mathit{id}} + m \cdot (1 + 32) \;\text{bytes},
\]
where $S_{\mathit{id}}$ is the size of a node identifier (6 bytes in our
implementation), $1$ byte encodes the slot index $i \in [0, 255]$, and $32$
bytes hold the value $v_j$. For $m \leq \tau \ll 256$, this is substantially
smaller than the $256 \cdot 32 = 8{,}192$ bytes of a full node and even smaller than large specializations.
A base node occupies the full specialization size as determined by the occupancy-aware
layout described in Section~\ref{sec:specialization}.
Across a sequence of $B$ blocks in which a given inner node changes $k$ slots per block on average, the total storage cost with delta nodes is in the worst case
\[
    \left\lfloor \frac{B \cdot k}{\tau} \right\rfloor \cdot S_{\mathit{base}}
    \;+\;
    B \cdot S_\delta(k),
\]
compared with $B \cdot S_{\mathit{base}}$ without delta nodes, giving a
compression ratio approaching $S_{\mathit{base}} / S_\delta(k)$ when
$k \ll \tau$.
In practice, the compression ratio is even higher because if the same key is updated multiple times, no additional slot in the delta is needed for subsequent updates.

The cached representation of a delta node extends its on-disk form with a copy of the slot values of $V_b$.
This avoids a secondary base-node lookup on every slot access if the slot is not in the delta once the node is resident in the cache.
When a delta node is loaded from disk, the node manager fetches $N_b$ (which may itself be served from cache), copies its data into the delta.
When a write guard is dropped, only the on-disk form  (identifier plus change set) is flushed, keeping write amplification proportional to $|\Delta|$ rather than to the full node width.
This is illustrated in Figure \ref{fig:delta-node-loading}.

\begin{figure}[htbp]
    \centering
    \includegraphics[width=0.8\textwidth]{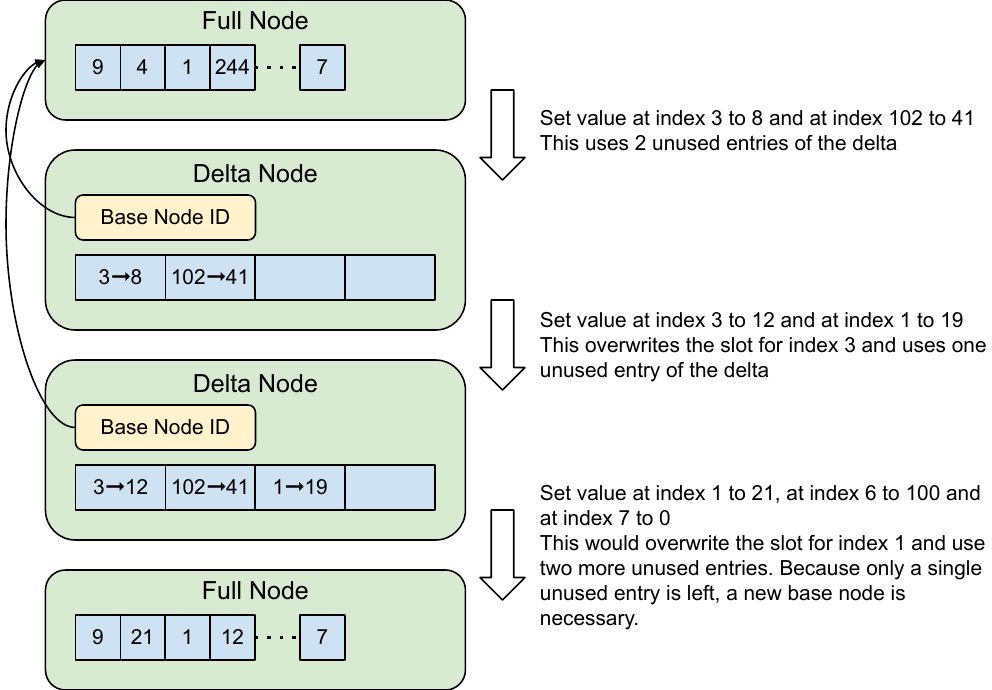}
    \caption{Conversion between a base node (e.g., a full node) and a delta node with delta size~$\tau = 4$.}
    \label{fig:delta-node-conversion}
\end{figure}

\begin{figure}[htbp]
    \centering
    \includegraphics[width=\textwidth]{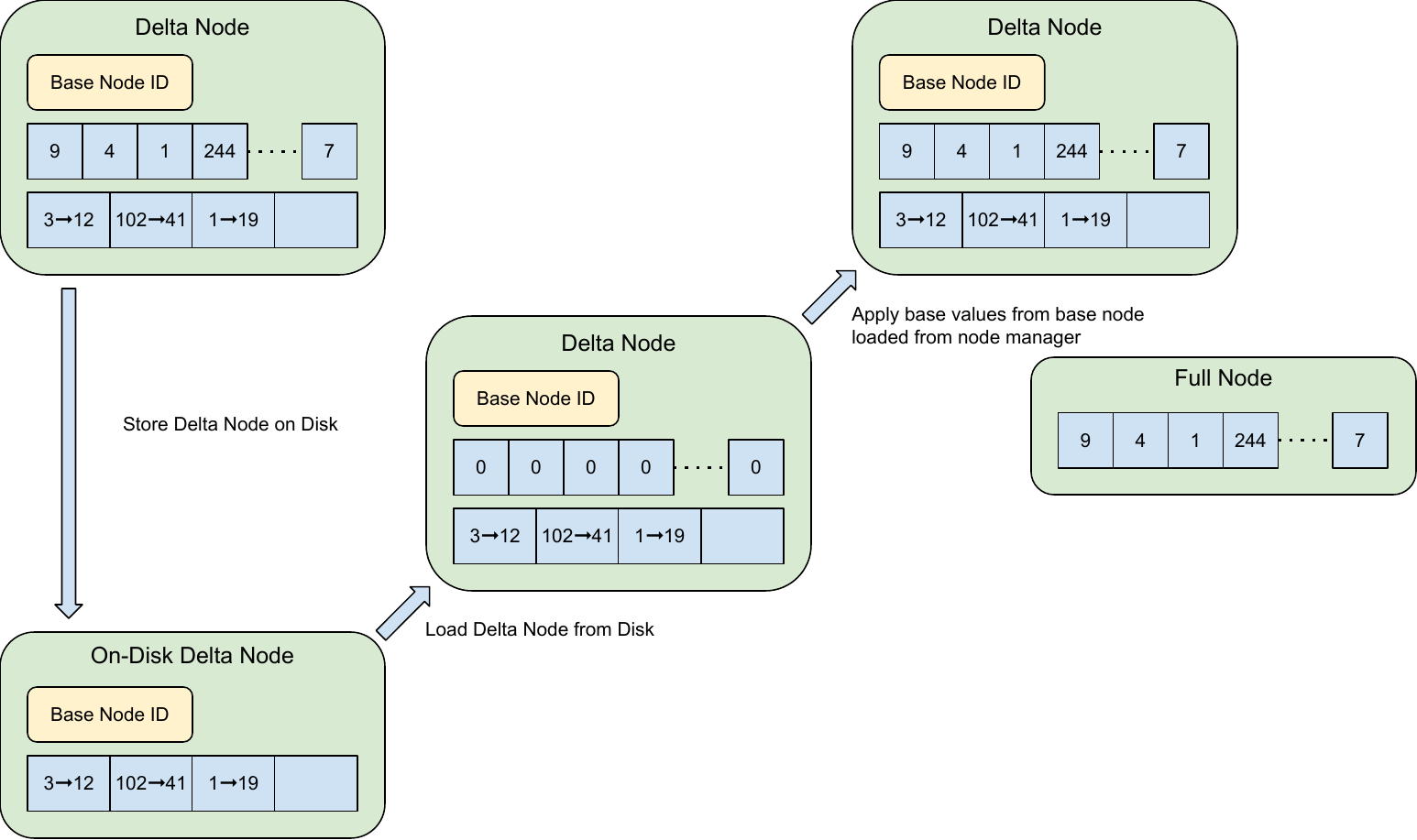}
    \caption{Loading a delta node: the on-disk form is read first, then the
        base node (e.g., a full node) is fetched and merged to produce the materialized view.}
    \label{fig:delta-node-loading}
\end{figure}

\section{Occupancy-aware Node Specialization \label{sec:specialization}}

Verkle nodes store 256 children or values, depending on whether the node is a leaf.
Because the Trie is sparse (see the experimental section), in practice, many nodes do not use all their available slots in the node data structure.
For example, leaf nodes at lower levels of the Trie often store only a few values, whereas inner nodes at lower levels often store only a few child IDs.
Storing all 256 slots per node, especially when the majority are empty, would consume substantial space due to the sheer number of nodes and leaves in the Verkle Trie.

\begin{figure}[htbp]
    \centering
    \begin{subfigure}{0.49\textwidth}
        \centering
        \includegraphics[width=\textwidth]{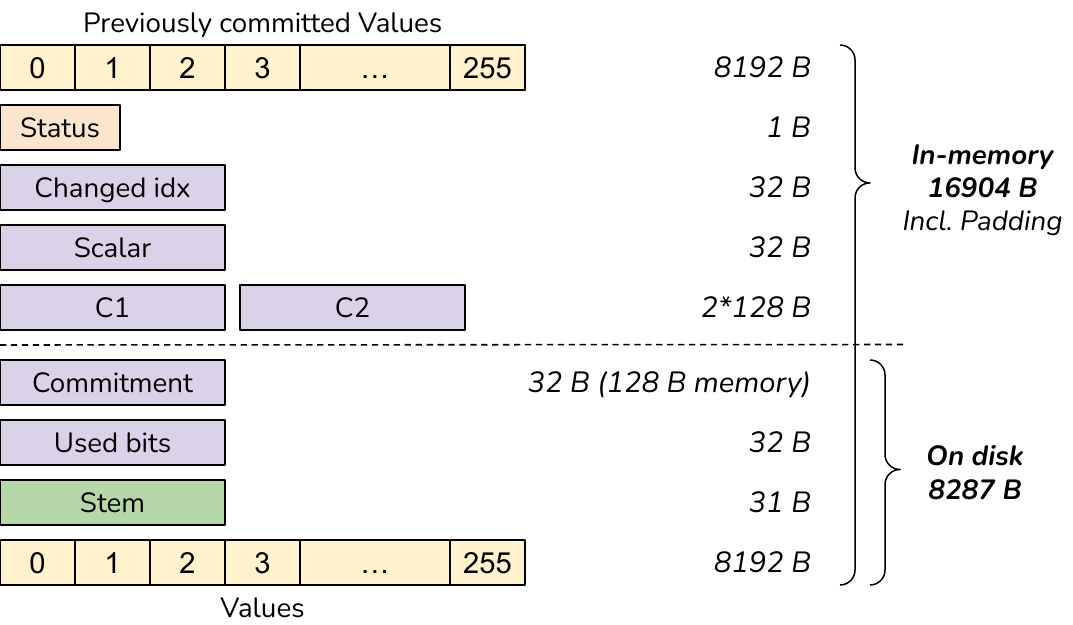}
        \caption{Full node}
        \label{fig:node-structure-full-leaf}
    \end{subfigure}
    \begin{subfigure}{0.49\textwidth}
        \centering
        \includegraphics[width=\textwidth]{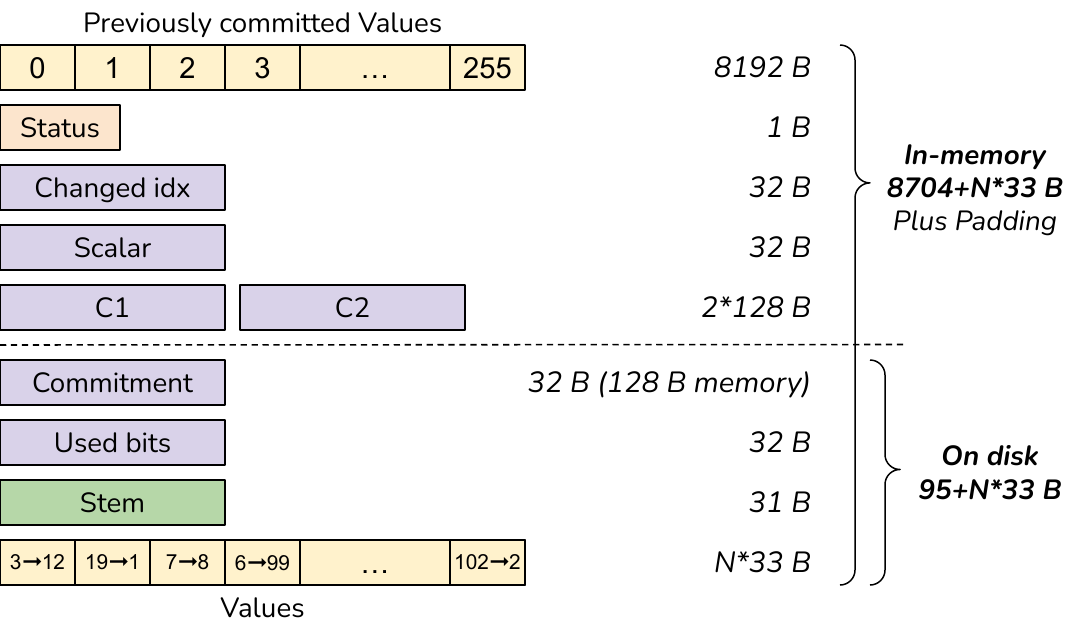}
        \caption{Sparse node}
        \label{fig:node-structure-sparse-leaf}
    \end{subfigure}\\[1em]
    \begin{subfigure}{0.49\textwidth}
        \centering
        \includegraphics[width=\textwidth]{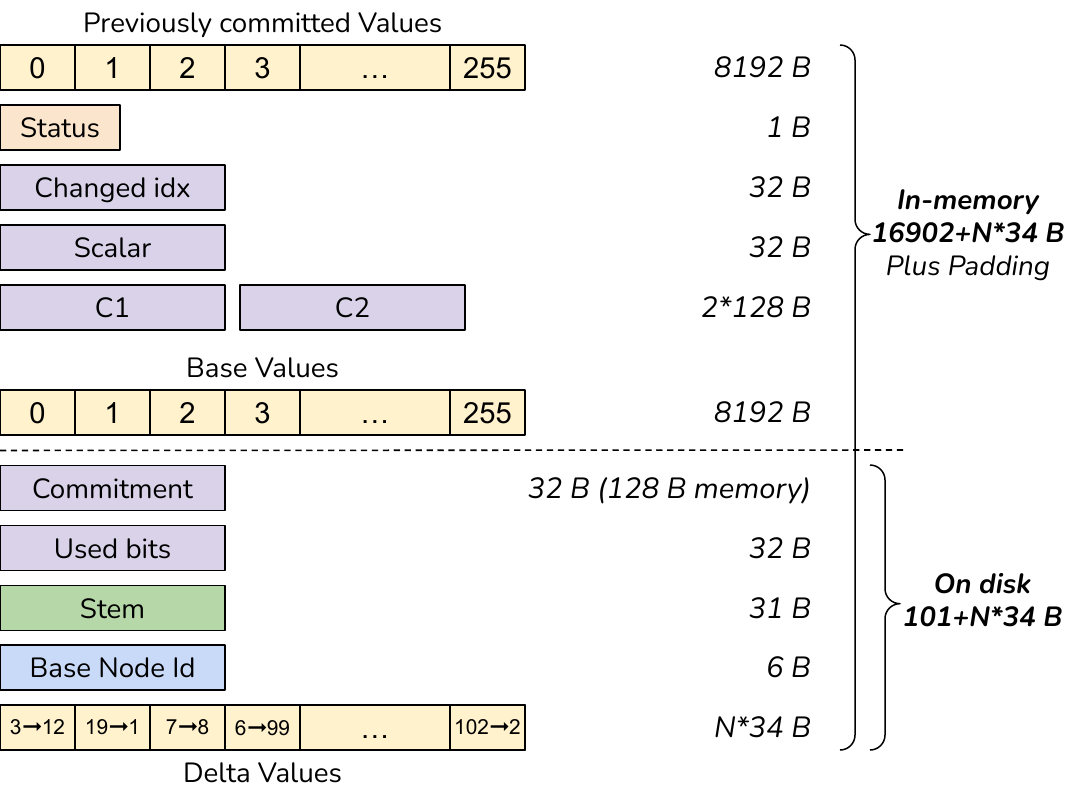}
        \caption{Delta node}
        \label{fig:node-structure-delta-leaf}
    \end{subfigure}
    \caption{Composition details of Verkle leaf node variants}
    \label{fig:node-structure}
\end{figure}

Figure \ref{fig:node-structure} shows the structure of our leaf nodes, which is similar for inner nodes.
In the bottom half, the 256 value slots, the leaf stem, the used bits, and the commitment are stored on disk.
The Verkle commitment is stored in a compressed 32-byte format and decompressed upon loading into main memory.
The top half, instead, contains only cached data used for commitment computation, making up $\sim 50 \%$ of the node size.

One could argue that finding the optimal number of specializations to minimize space usage is straightforward: one specialization per node type and the number of possible children/values, e.g., a leaf node with one child, an inner node with one value, a leaf node with two children, etc.
We refer to the specialization of an inner node as $\texttt{Inner}_i$, where $i$ is the maximal number of children of the specialization, and of a leaf node as $\texttt{Leaf}_i$, where $i$ is the maximal number of values.
However, having one specialization per node type requires that, whenever a new child or value is added to an inner or leaf node, the node be specialized to include one more child or value.

For example, a leaf node with three values is transformed to a leaf node with four values $\texttt{Leaf}_3 \rightarrow \texttt{Leaf}_4$.
Transforming a specialization to another specialization can be expensive because it requires copying all data to a new object, obtaining a new storage ID, and freeing the previous ID, which must traverse the entire storage stack.
Note that the individual operation is inexpensive in isolation, but it becomes expensive when many such transformations must be performed to maintain the database state when node specializations are used.

Another downside of having many specializations is that they do not integrate well with node deletions and the reuse of disk space for deleted nodes.
The offset of a node is encoded in its ID.
The ID, in turn, is stored in its parent node.
When a node is deleted, the nodes that follow it in the file are not shifted backward, as this would be very expensive.
Instead, the deleted node's offset is added to the free list, and the next time a new node is created, the offset is reused.
As the tree grows, the node distribution changes, and the number of nodes in a given specialization may increase or decrease.
This can temporarily result in many reusable offsets, which means disk space is occupied but not used to store actual data.
Having more specializations would lead to greater changes in the node distribution and more reusable nodes, resulting in more occupied but unused disk space.
Note, however, that having more specializations does not lead to write amplification, as the number of dirty nodes does not change.
It does not matter whether to update the existing node, which is marked as dirty, or a new node is created -- either way, there is one dirty node to be eventually written to disk when it gets evicted from the cache.

To avoid the copying, we have a \emph{subsumption property}.
A specialization of type $\textit{Leaf}_i$ can accommodate one to $i$ children; similarly, $\textit{Inner}_i$ can accommodate one to $i$ values.
For example, if the node specialization $\textit{Leaf}_4$ can be used for node types $\textit{Leaf}_4$, $\textit{Leaf}_3$, $\textit{Leaf}_2$, and $\textit{Leaf}_1$.
However, the fewer children/values a node type has in its specialization, the greater the space loss, since unoccupied leaves and children still occupy space in the specialization.

\paragraph{Specialization Functions.}
We model storage optimization as a mathematical problem by introducing a \emph{specialization function} that maps each node type to a representative specialization.
\begin{definition}\label{def:specialization}
    A \emph{specialization function} is a surjective function $\varphi\colon [1, n] \to X$ satisfying:
    \begin{itemize}
        \item $|X| = k$ \hfill\textbf{(cardinality constraint)}
        \item $X \subseteq [1, n]$ \hfill\textbf{(domain constraint)}
        \item $\forall\, i \in [1, n]\colon i \leq \varphi(i)$ \hfill\textbf{(coverage constraint)}
    \end{itemize}
\end{definition}
The function $\varphi$ maps each node type $i$ to a specialization $i' = \varphi(i)$. Surjectivity ensures that all $k$ specializations in $X$ are actually used.
The domain constraint requires every specialization to be a valid node type.
The coverage constraint $i \leq \varphi(i)$ ensures that a specialization has sufficient capacity for every node type assigned to it, since a node type with $i$ children must be covered by a specialization with at least $i$ children.
We denote the set of all specialization functions with parameters $n$ and $k$ by $\Phi^k_n$; every $\varphi \in \Phi^k_n$ must satisfy the three constraints of Definition~\ref{def:specialization}.
Note that $n \in X$ is forced: if $\varphi(n) \neq n$ for any $\varphi \in \Phi^k_n$, the coverage constraint $n \leq \varphi(n)$ is violated.
We may therefore write $X = X' \cup \{n\}$, where $X' \subseteq [1, n-1]$ and $|X'| = k - 1$.

\paragraph{Space Cost.}
We define the space function $s: [1,n] \rightarrow \mathbb{N}$, denoting the number of bytes for a specialization.
For example, $i \mapsto q$ gives the number of bytes $q$ for specialization $i$.
We further assume that the space function is a monotonic function with the property:
\[
    \forall i,j \in [1,n]: i \leq j \implies s(i) \leq s(j)
\]
With the coverage constraint, we can make the space function monotonic, as explained at the end of this sub-section.

We can further simplify the space function. We can express the costs $s(i) = h + j \delta$ where $h$ is a constant storage overhead per node specialization, and for each value/child, we have extra costs of $\delta$ bytes.
This, of course, depends on the system and programming language implementation, as well as on whether the specialization can be expressed as a linear function of the number of values/children.

\paragraph{Frequency Count.}
The mathematical model must account for the frequency of node types, as we observed in our experiments that the number of children/values per node is not uniformly distributed; some node types occur much less frequently than others, so introducing a specialization for them does not significantly reduce storage occupancy.
We denote the frequency of a node type with the following function $f: [1, n] \rightarrow \mathbb{N}$.

With functions $f$ for the frequency distribution and $s$ for space consumption, a mathematical program can be devised to express the optimization problem for node specialization.

\begin{definition}
    The cost of a specialization function $\varphi$ is defined as follows,
    \[
        \textit{cost}(\varphi) = \sum_{i \in [1,n]} s(\varphi(i)) f(i)
    \]
\end{definition}

The space consumption for a node type $i$ is the space consumption of its specialization $s(\varphi(i))$ multiplied by its frequency $f(i)$.
The total consumption is the sum of the consumption across all node types.

We note that $\textit{cost}(\varphi)$ models storage consumption only and does not include the throughput penalty that a larger specialization set imposes through branch prediction, cache pressure, and free-list fragmentation.
We treat $k$ as a separate design parameter chosen against a throughput constraint (Section~\ref{sec:evaluation}) and use the dynamic program below to solve the storage sub-problem exactly for a given $k$.
This decomposition is justified empirically: the storage function $\varphi$ is determined primarily by the observed frequency distribution $f$, whereas the throughput cost of an additional specialization is largely independent of where its threshold is placed.

The mathematical program searches for an optimal specialization function among many possible specialization functions:
\begin{definition}\label{def:mathprog}
    The node specialization problem is defined by finding the specialization function $\varphi$ with minimal cost $
        \min_{\varphi \in \Phi^k_n} \textit{cost}(\varphi)
    $.
\end{definition}

The mathematical program seeks a specialization function $\varphi$ with minimal space usage.
Among the optimal solutions $\arg \min_{\varphi \in \Phi^k_n} \textit{cost}(\varphi)$, there could be pathological solutions to the problem, in which $k$ is either $1$ or $n$.
If $ k=1$, we have a single specialization that must accommodate all possible occurring node types; hence, the solution is $i \mapsto n$ for all $i$, $1 \leq i \leq n$.
If $k=n$, we have $n$ specializations.
Each node type must become its own specialization, i.e., $i \mapsto i$.
Both solutions will be suboptimal in practice due to non-uniform node frequencies and specialization costs.
The sweet spot will be somewhere in between.

\paragraph{Normalized Specialization Functions.}

\begin{definition} \label{def:normalized-specialization}
    We define a normalized specialization function $\varphi_X^n$ by ordered set $X=\{x_1, \ldots, x_{k-1} \} \subseteq [1,n-1]$ with following construction for all $k$, $1<k\leq n$,
    \begin{align*}
        i & \mapsto x_1 & \mbox {if $i \in [1, x_1]$}                                   \\
        i & \mapsto x_l & \mbox {if $i \in [x_{l-1}+1, x_l]$, for all $l$, $1 < l < k$} \\
        i & \mapsto n   & \mbox {if $i \in [x_{k-1}+1, n]$}
    \end{align*}
    For $k=1$, the normalized specialization function is the mapping $i \mapsto n$, for all $i$, $1 \leq i \leq n$, where $X$ is the empty set.
\end{definition}
A normalized specialization is an $k$-partitioning of the $\varphi_X^n$'s domain into consecutive disjoint intervals from $1$ to $n$.

We define the set of all ranges for a normalized specialization function by $\mathbb{X}^{k-1}_{n-1}= \{ X \subseteq [1, n-1] \mid |X| = k-1 \}$.
The parameters of the normalized specialization function $\varphi_X^n$ is reduced by one since the range $X \in \mathbb{X}^{k-1}_{n-1}$ of a normalized function $\varphi_X^n$ is the set $X\cup\{n\}$.

\begin{lemma}\label{lem:norm-subset}
    \begin{equation}
        \forall X \in \mathbb{X}^{k-1}_{n-1}: \varphi_X^n \in \Phi^k_n
    \end{equation}
\end{lemma}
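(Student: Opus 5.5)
The plan is to fix an arbitrary $X=\{x_1<\dots<x_{k-1}\}\in\mathbb{X}^{k-1}_{n-1}$ and check the three constraints of Definition~\ref{def:specialization} for $\varphi_X^n$ with range $X\cup\{n\}$, together with surjectivity. Before doing so I will confirm that $\varphi_X^n$ is a well-defined total function on $[1,n]$. Because $x_1<x_2<\dots<x_{k-1}\le n-1<n$, the intervals $[1,x_1],[x_1+1,x_2],\dots,[x_{k-2}+1,x_{k-1}],[x_{k-1}+1,n]$ are nonempty, pairwise disjoint, and consecutive. Their union is $[1,n]$, so every $i$ falls into exactly one case of Definition~\ref{def:normalized-specialization}. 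The degenerate case $k=1$ is treated separately: there $X=\emptyset$, the map is $i\mapsto n$, and the range is $\{n\}$.

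The \textbf{domain constraint} holds because every image value is either some $x_l\in[1,n-1]$ or $n$, so the range lies in $[1,n]$. For the \textbf{cardinality constraint}, the range $X\cup\{n\}$ has exactly $k$ elements, since $|X|=k-1$ and $n\notin X$. For \textbf{surjectivity} onto $X\cup\{n\}$, each $x_l$ is the image of $i=x_l$, the right endpoint of its own nonempty interval. Likewise $n$ is the image of $i=n$, which lies in $[x_{k-1}+1,n]$ because $x_{k-1}\le n-1$. The \textbf{coverage constraint} follows because $i$ belongs to an interval whose right endpoint is its image, giving $i\le x_l=\varphi_X^n(i)$, or else $i\le n=\varphi_X^n(i)$.

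The only real care point is making the interval bookkeeping exact. I will use the strict ordering of $X$ and the bound $x_{k-1}\le n-1$ to guarantee two things: every interval is nonempty, which surjectivity needs, and the partition covers $[1,n]$ without overlap, which well-definedness needs. I also need to treat uniformly the edge cases $k=2$ (where there is no middle clause) and $k=n$ (where $X=[1,n-1]$ and $\varphi$ is the identity). Everything else is immediate.
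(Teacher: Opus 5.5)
Your proposal is correct and matches the paper's proof: both check the domain, cardinality, surjectivity, and coverage conditions directly, and both get surjectivity by noting that each $x_l$ and $n$ is its own preimage. Your extra checks — that the intervals are nonempty and partition $[1,n]$ so $\varphi_X^n$ is well defined, and the separate $k=1$ case — make explicit what the paper's proof leaves implicit.
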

\begin{proof}
    We verify the three conditions of Definition~\ref{def:specialization}.

    \textbf{(Domain constraint)} By construction, $\varphi_X^n$ maps every $i \in [1,n]$ to some element of $X \cup \{n\} \subseteq [1,n]$.

    \textbf{(Cardinality constraint)} The range of $\varphi_X^n$ is exactly $X \cup \{n\}$, which has $|X| + 1 = (k-1)+1 = k$ elements, since $n \notin X$ as $X \subseteq [1,n-1]$.

    \textbf{(Surjectivity)} Each element of $X \cup \{n\}$ is the image of at least one element: every $x_l \in X$ is the image of itself, and $n$ is the image of itself.

    \textbf{(Coverage constraint)} Every $i$ in an interval $[x_{l-1}+1, x_l]$ satisfies $i \leq x_l = \varphi_X^n(i)$ by definition of the interval. The same holds for the first interval $[1, x_1]$ and the last interval $[x_{k-1}+1, n]$. Hence $i \leq \varphi_X^n(i)$ for all $i \in [1,n]$.
\end{proof}

\begin{lemma}\label{lem:normalisation-equivalence}
    Given an optimal specialization function $\varphi \in \arg \min_{\varphi \in \Phi^k_n} \textit{cost}(\varphi)$ with range $X\cup\{n\}$ where $X$ does not contain $\{n\}$,
    \begin{equation}
        \textit{cost}(\varphi) = \textit{cost}(\varphi_X^n).
    \end{equation}
\end{lemma}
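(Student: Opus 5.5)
We prove the equality by two inequalities. Since $\varphi$ is optimal and Lemma~\ref{lem:norm-subset} gives $\varphi_X^n \in \Phi^k_n$, we have $\textit{cost}(\varphi) \leq \textit{cost}(\varphi_X^n)$ immediately. The substance of the proof is therefore the reverse inequality $\textit{cost}(\varphi_X^n) \leq \textit{cost}(\varphi)$. We obtain it termwise, by showing that for every node type $i \in [1,n]$,
\[
    s(\varphi_X^n(i))\, f(i) \leq s(\varphi(i))\, f(i).
\]

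Because $f(i) \geq 0$ and $s$ is monotone, it suffices to prove the pointwise bound $\varphi_X^n(i) \leq \varphi(i)$ for all $i$. The key observation is that $\varphi_X^n(i)$ is the \emph{smallest} element of the common range $X \cup \{n\}$ that is at least $i$. This follows from Definition~\ref{def:normalized-specialization}. Write $X = \{x_1 < \cdots < x_{k-1}\}$ and set $x_0 = 0$, $x_k = n$. For $i \in [x_{l-1}+1, x_l]$ we have $\varphi_X^n(i) = x_l$, and every element of $X \cup \{n\}$ that is smaller than $x_l$ is at most $x_{l-1} < i$. Hence $x_l = \min\{x \in X \cup \{n\} : x \geq i\}$.

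On the other hand, $\varphi(i)$ lies in $X \cup \{n\}$ by hypothesis, and $\varphi(i) \geq i$ by the coverage constraint. So $\varphi(i)$ belongs to the set whose minimum is $\varphi_X^n(i)$, which gives $\varphi_X^n(i) \leq \varphi(i)$. Applying monotonicity of $s$, multiplying by $f(i)$, and summing over $i$ yields $\textit{cost}(\varphi_X^n) \leq \textit{cost}(\varphi)$. Combined with the first inequality, this gives equality.

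The main obstacle is bookkeeping rather than substance. We must check that the range of $\varphi$, written as $X \cup \{n\}$ with $n \notin X$, has $|X| = k-1$ and $X \subseteq [1,n-1]$, so that $X \in \mathbb{X}^{k-1}_{n-1}$ and $\varphi_X^n$ is well defined. This holds because $n$ is forced into the range, as remarked after Definition~\ref{def:specialization}, and the cardinality constraint then gives $|X| = k-1$. We must also treat the boundary case $k=1$ separately: there $X = \emptyset$ and both functions are the constant map $i \mapsto n$, so their costs coincide trivially. The "smallest covering element" characterization should be stated carefully for the first interval $[1,x_1]$ and the last interval $[x_{k-1}+1,n]$; the conventions $x_0 = 0$ and $x_k = n$ handle both cases uniformly.
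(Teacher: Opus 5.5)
Your proof is correct and takes essentially the same route as the paper: both show $\textit{cost}(\varphi_X^n) \le \textit{cost}(\varphi)$ termwise, using that $\varphi_X^n(i)$ is the least element of $X \cup \{n\}$ that is at least $i$, so coverage forces $\varphi_X^n(i) \le \varphi(i)$ and monotonicity of $s$ finishes; optimality then gives equality. You fold the paper's three-case split into a single pointwise inequality, and you state explicitly that $\varphi_X^n \in \Phi^k_n$ (Lemma~\ref{lem:norm-subset}) and $X \in \mathbb{X}^{k-1}_{n-1}$, which the paper's appeal to optimality uses only implicitly.
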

\begin{proof}
    The case $k = 1$ is immediate: both $\varphi$ and $\varphi_X^n$ are the constant map $i \mapsto n$, so their costs are equal.
    For $k > 1$, let $X = \{x_1 < \cdots < x_{k-1}\} \subseteq [1,n-1]$ be the range of $\varphi$ excluding $n$. We show that $\textit{cost}(\varphi_X^n) \leq \textit{cost}(\varphi)$; since $\varphi$ is optimal, this implies equality.
    We compare the terms $s(\varphi(i))f(i)$ and $s(\varphi_X^n(i))f(i)$ for each $i \in [1,n]$, considering three cases.
    \begin{enumerate}
        \item \textbf{$\varphi(i) = \varphi_X^n(i)$:} The terms coincide trivially.
        \item \textbf{$\varphi(i) < \varphi_X^n(i)$:} We show this case is impossible.
              For any interval defining $\varphi_X^n$, the smallest element of $X \cup \{n\}$ that is greater than or equal to $i$ is assigned as $\varphi_X^n(i)$.
              Any element of $X \cup \{n\}$ strictly smaller than $\varphi_X^n(i)$ is at most $x_{l-1}$ for some $l$, but then $\varphi(i) \leq x_{l-1} < i$, violating the coverage constraint $i \leq \varphi(i)$.
        \item \textbf{$\varphi(i) > \varphi_X^n(i)$:} Since $\varphi_X^n(i) \leq \varphi(i)$, monotonicity of $s$ gives $s(\varphi_X^n(i)) \leq s(\varphi(i))$, so the contribution of $i$ to $\textit{cost}(\varphi_X^n)$ is no greater than its contribution to $\textit{cost}(\varphi)$.
    \end{enumerate}
    Summing over all $i \in [1,n]$, we obtain $\textit{cost}(\varphi_X^n) \leq \textit{cost}(\varphi)$. Since $\varphi$ is optimal, equality holds.
\end{proof}

\begin{theorem}\label{thm:representation}
    \begin{equation}
        \min_{\varphi \in \Phi^k_n} \textit{cost}(\varphi) = \min_{X \in \mathbb{X}^{k-1}_{n-1}} \textit{cost}(\varphi_X^n)
    \end{equation}
\end{theorem}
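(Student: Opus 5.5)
The equality follows from proving the two inequalities separately, each from one of the preceding lemmas. Both minima exist, since $\Phi^k_n$ and $\mathbb{X}^{k-1}_{n-1}$ are finite sets. I assume $1 \le k \le n$, so both sets are nonempty. For $\Phi^k_n$, nonemptiness follows from Lemma~\ref{lem:norm-subset} applied to any $X \in \mathbb{X}^{k-1}_{n-1}$.

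\textbf{($\le$)} By Lemma~\ref{lem:norm-subset}, every normalized function $\varphi_X^n$ with $X \in \mathbb{X}^{k-1}_{n-1}$ belongs to $\Phi^k_n$. The right-hand minimum therefore ranges over a subset of the functions considered on the left. Hence $\min_{\varphi \in \Phi^k_n} \textit{cost}(\varphi) \le \textit{cost}(\varphi_X^n)$ for every such $X$, and in particular for the minimizing one.

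\textbf{($\ge$)} Pick an optimal $\varphi^\ast \in \arg\min_{\varphi \in \Phi^k_n} \textit{cost}(\varphi)$. As noted after Definition~\ref{def:specialization}, its range has the form $X \cup \{n\}$ with $X \subseteq [1,n-1]$. Since the range has exactly $k$ elements, $|X| = k-1$, so $X \in \mathbb{X}^{k-1}_{n-1}$. Lemma~\ref{lem:normalisation-equivalence} then gives $\textit{cost}(\varphi^\ast) = \textit{cost}(\varphi_X^n)$. This value is at least $\min_{X' \in \mathbb{X}^{k-1}_{n-1}} \textit{cost}(\varphi_{X'}^n)$. Combining the two inequalities yields the theorem.

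The main point needing care is the membership $X \in \mathbb{X}^{k-1}_{n-1}$, which requires $n$ to lie in the range and the cardinality to be exactly $k-1$ after removing $n$. Both follow from the coverage and cardinality constraints. The $k=1$ case needs a separate check: there $X = \emptyset$ and both sides equal $\textit{cost}(i \mapsto n)$. The ($\ge$) direction could also be argued without invoking optimality, by using the pointwise comparison inside the proof of Lemma~\ref{lem:normalisation-equivalence} to show $\textit{cost}(\varphi_X^n) \le \textit{cost}(\varphi)$ for every $\varphi$. Applying the lemma as stated to an optimal $\varphi^\ast$ is enough, however.
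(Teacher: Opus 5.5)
Your proof is correct and is essentially the paper's argument: one inequality comes from Lemma~\ref{lem:norm-subset} (the normalized functions form a subfamily of $\Phi^k_n$), and the other from applying Lemma~\ref{lem:normalisation-equivalence} to an optimal $\varphi$ with range $X \cup \{n\}$. Your added checks make explicit what the paper leaves implicit: finiteness and nonemptiness of both families, $|X| = k-1$ so that $X \in \mathbb{X}^{k-1}_{n-1}$, and the observation that the pointwise comparison in that lemma needs no optimality.
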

\begin{proof}
    The inequality $\min_{X \in \mathbb{X}^{k-1}_{n-1}} \textit{cost}(\varphi_X^n) \geq \min_{\varphi \in \Phi^k_n} \textit{cost}(\varphi)$ follows from Lem\-ma~\ref{lem:norm-subset}, which shows that every normalized specialization function $\varphi_X^n$ lies in $\Phi^k_n$, so the minimum over the smaller family cannot be smaller.
    For the reverse inequality, let $\varphi \in \arg\min_{\varphi \in \Phi^k_n} \textit{cost}(\varphi)$ be an optimal specialization function with range $X \cup \{n\}$. By Lemma~\ref{lem:normalisation-equivalence}, the normalized function $\varphi_X^n$ satisfies $\textit{cost}(\varphi_X^n) = \textit{cost}(\varphi)$. Hence $\min_{X \in \mathbb{X}^{k-1}_{n-1}} \textit{cost}(\varphi_X^n) \leq \textit{cost}(\varphi) = \min_{\varphi \in \Phi^k_n} \textit{cost}(\varphi)$.
    Combining both inequalities proves the theorem.
\end{proof}
Searching for an optimal normalized specialization in the set $\mathbb{X}^{k-1}_{n-1}$ also gives an optimal specialization function.
Reducing the search space provides a handle for formulating the optimization problem as a dynamic program.

\paragraph{Interval Costs.}
With a normalized specialization function $\varphi_X^n$, the domain $[1,n]$ of node types is partitioned into $k$ disjoint intervals $[1,x_1]$, $[x_1+1, x_2]$, $\ldots$, $[x_{k-1}+1, n]$.
An interval $[i,j]$ with $1\leq i \leq j \leq n$ would represent node types with $i, i+1, \ldots, j$ values/children that choose the specialization of node type $j$.
The elements of an interval $[i,j]$ are mapped to the same specialization, which is the last element $j$ in the interval. An element in interval $[1,x_1]$ is mapped to specialization $x_1$, an element in $[x_{l-1}+1, x_l]$ is mapped to $x_l$, for all $l$, $1 < l < k$ and an element in $[x_{k-1}+1, n]$ is mapped to $n$.
Hence, we can compute the costs for node types in a domain interval $[i,j]$ as follows,
\begin{equation}
    c(i,j) = \sum_{l=i}^{j} s(j) f(l) =  s(j) \sum_{l=i}^{j} f(l)
\end{equation}
since the node types $l \in [i,j]$ use the specialization $j$.
Given a normalized specialization function $\varphi_X^n$ with its domain partitioning $[1, x_1], [x_1+1, x_2], \ldots, [x_{k-2}+1, x_{k-1}], [x_{k-1}+1, n]$ we can express the costs over its disjoint sets
\begin{equation*}
    \textit{cost}(\varphi_X^n) = c(1,x_1) + c(x_1+1, x_2) + \ldots + c(x_{k-2}+1, x_{k-1}) + c(x_{k-1}+1, n)
\end{equation*}

\begin{lemma}[Cost Equivalence]\label{lem:cost-equivalence}
    Let $\varphi_X^n$ be a normalized specialization function with $X = \{x_1, \ldots, x_{k-1}\} \subseteq [1, n-1]$.
    Then the interval-sum formula and the element-wise definition of cost coincide:
    \[
        c(1, x_1) + \sum_{l=2}^{k-1} c(x_{l-1}+1,\, x_l) + c(x_{k-1}+1,\, n)
        \;=\;
        \sum_{i=1}^{n} s\!\left(\varphi_X^n(i)\right) f(i).
    \]
\end{lemma}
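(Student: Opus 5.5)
The plan is to regroup the element-wise sum $\sum_{i=1}^{n} s(\varphi_X^n(i)) f(i)$ according to the partition of $[1,n]$ induced by $X$, and to observe that $\varphi_X^n$ is constant on each block.

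First, with the convention $x_0 = 0$ and $x_k = n$, I will record that the intervals $I_l = [x_{l-1}+1, x_l]$ for $l = 1, \ldots, k$ are pairwise disjoint and cover $[1,n]$. This holds because $0 = x_0 < x_1 < \cdots < x_{k-1} < x_k = n$, which uses $X \subseteq [1,n-1]$ and $X$ being ordered. In particular every $I_l$ is nonempty. Consequently the sum over $[1,n]$ splits as $\sum_{l=1}^{k} \sum_{i \in I_l} s(\varphi_X^n(i)) f(i)$ by finite additivity.

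Second, Definition~\ref{def:normalized-specialization} gives $\varphi_X^n(i) = x_l$ for all $i \in I_l$. The three clauses of the definition are exactly the cases $l=1$, $1<l<k$, and $l=k$ (with $x_k = n$). Hence each inner sum equals $\sum_{i=x_{l-1}+1}^{x_l} s(x_l) f(i)$. Factoring out the constant $s(x_l)$, this is $c(x_{l-1}+1, x_l)$ by the definition of interval cost. Reindexing the sum over $l$ then yields $c(1,x_1) + \sum_{l=2}^{k-1} c(x_{l-1}+1, x_l) + c(x_{k-1}+1, n)$, which is the left-hand side.

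The main care point is the boundary handling rather than any real obstacle. I need to treat $k=1$ separately: there $X=\emptyset$, and the left-hand side should be read as the single term $c(1,n)$, matching $\varphi_X^n \equiv n$. I also need to note that for $k=2$ the middle sum is empty. Beyond this, the only thing to check is that the strict ordering of $X$ ensures the intervals do not overlap, so that no $f(i)$ is double counted.
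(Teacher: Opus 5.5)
Your proposal is correct and follows essentially the same route as the paper: with the convention $x_0 = 0$, $x_k = n$, split the element-wise sum over the partition $[x_{l-1}+1, x_l]$, use that $\varphi_X^n$ is constantly $x_l$ on each block, and factor out $s(x_l)$ to obtain $c(x_{l-1}+1, x_l)$. Your explicit handling of $k=1$ (reading the left-hand side as the single term $c(1,n)$) and of the empty middle sum for $k=2$ is a small addition. The paper leaves these cases implicit by working with the unified sum $\sum_{l=1}^{k} c(x_{l-1}+1, x_l)$.
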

\begin{proof}
    The intervals $[1, x_1],\; [x_1+1, x_2],\; \ldots,\; [x_{k-1}+1, n]$ form a partition of $[1, n]$ into $k$ consecutive disjoint sets whose union is $[1, n]$.
    On each interval $[x_{l-1}+1, x_l]$ (with the convention $x_0 = 0$ and $x_k = n$), the normalized specialization function assigns the constant value $\varphi_X^n(i) = x_l$
    to every $i$ in that interval.
    Therefore,
    \begin{align*}
        \sum_{i=1}^{n} s\!\left(\varphi_X^n(i)\right) f(i)
         & = \sum_{l=1}^{k} \sum_{i=x_{l-1}+1}^{x_l} s\!\left(\varphi_X^n(i)\right) f(i) \\
         & = \sum_{l=1}^{k} \sum_{i=x_{l-1}+1}^{x_l} s(x_l)\, f(i)                       \\
         & = \sum_{l=1}^{k} s(x_l) \sum_{i=x_{l-1}+1}^{x_l} f(i)                         \\
         & = \sum_{l=1}^{k} c(x_{l-1}+1,\, x_l),
    \end{align*}
    which is precisely the interval-sum formula.
\end{proof}

To avoid computing the frequency counts $\sum_{i\leq l \leq j} f(l)$ for each interval $[i,j]$, the cumulative frequency counts can be deduced by a recursive schema shown below:
\begin{align*}
    F_0 & = 0                             \\
    F_1 & = f(1)                          \\
    F_i & = F_{i-1} + f(i) & i \in [2, n] \\
\end{align*}
where $F_i=\sum_{j=1}^i f(j)$ for all $i$, $1 \leq i \leq n$.
The cumulative frequency count gives us the following identity:
\begin{equation}
    \sum_{i \leq l \leq j} f(l) = F_j - F_{i-1}
\end{equation}
for all $i,j$ where $1 \leq i \leq j \leq n$.
The costs can be reduced to the following calculation
\begin{equation}
    c(i,j)  = s(j)\,(F_j - F_{i-1})
\end{equation}
The computation for each interval becomes constant.
We have $\frac{n(n+1)}{2}$ intervals in the range from $1$ to $n$ and $n$ steps for computing the cumulative frequency count.
Hence, the worst-case runtime for computing the costs of all intervals is $\mathcal{O}\!\left(n+\frac{n(n+1)}{2}\right)=\mathcal{O}(n^2)$.

\paragraph{Dynamic Program.}
We construct a two-dimensional matrix $Y$ that stores the minimum cost of partitioning the node types from $1$ to $n$ into at most $k$ consecutive disjoint intervals that cover the entire node range.
The $j$-th row represents the partitioning from node type $1$ to $j$, and the $l$-th column represents an $l$-partitioning with $l$ consecutive disjoint sets in the range from $1$ to $j$ covering the whole range.
A disjoint interval represents a specialization for node types in the interval, using the last element in the interval as its specialization (cf.\ Definition~\ref{def:normalized-specialization}).

The recurrence is:
\begin{align*}
    y_{j,1}    & = c(1,j)                                                 & j \in [1,n]                  \\
    y_{1,l}    & = M                                                      & l \in [2,k]                  \\
    y_{j, l+1} & = \min_{1 \leq i < j} \left( y_{i, l} + c(i+1,j) \right) & j \in [2,n],\, l \in [1,k-1]
\end{align*}
where $M$ is a sufficiently large constant representing an infeasible solution~\cite{mathprog02}.
A single element cannot be split into more than one non-empty interval, hence $y_{1,l} = M$ for $l > 1$.
The optimal cost of a $k$-partition of $[1, n]$ is given by $y_{n,k}$.

To establish correctness, we first verify that the problem has the optimal substructure property~\cite{cormen09}.
\begin{lemma}[Optimal Substructure]
    Let $\varphi_X^n$ with $X = \{x_1, \ldots, x_{k-1}\}$ be an optimal solution for the range $[1,n]$ with $k$ intervals and $k>1$.
    Then the restriction $X' = \{x_1, \ldots, x_{k-2}\}$ is an optimal normalized specialization $\varphi_{X'}^{x_{k-1}}$ for the range $[1, x_{k-1}]$ with $k-1$ intervals.
\end{lemma}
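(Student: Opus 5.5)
The plan is the standard cut-and-paste (exchange) argument, built on the interval decomposition of Lemma~\ref{lem:cost-equivalence}. Write $m = x_{k-1}$. Since $X \subseteq [1,n-1]$ is an ordered set of $k-1$ elements, we have $k-1 \le m \le n-1$, and $X' = \{x_1,\ldots,x_{k-2}\} \subseteq [1, m-1]$ with $|X'| = k-2$. Hence $X' \in \mathbb{X}^{k-2}_{m-1}$. By Lemma~\ref{lem:norm-subset}, applied with $m$ in place of $n$ and $k-1$ in place of $k$, $\varphi_{X'}^{m}$ is a valid specialization function in $\Phi^{k-1}_{m}$. For $k=2$ the set $X'$ is empty and $\varphi_{X'}^m$ is the constant map $i\mapsto m$, which is consistent with Definition~\ref{def:normalized-specialization}.

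The key step is splitting the cost. By Lemma~\ref{lem:cost-equivalence} applied to the range $[1,n]$,
\[
\textit{cost}(\varphi_X^n) = \Bigl(c(1,x_1) + \sum_{l=2}^{k-2} c(x_{l-1}+1,x_l) + c(x_{k-2}+1, m)\Bigr) + c(m+1,n).
\]
Applying the same lemma to the range $[1,m]$ with $k-1$ intervals, the bracketed term is exactly $\textit{cost}(\varphi_{X'}^{m})$. Here the cost on $[1,m]$ uses the same $s$ and $f$ restricted to $[1,m]$, and the last interval $[x_{k-2}+1,m]$ maps to $m$. So $\textit{cost}(\varphi_X^n) = \textit{cost}(\varphi_{X'}^m) + c(m+1,n)$.

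Now suppose for contradiction that $\varphi_{X'}^m$ is not optimal for $[1,m]$ with $k-1$ intervals. By Theorem~\ref{thm:representation} on $[1,m]$, there is some $Y' \in \mathbb{X}^{k-2}_{m-1}$ with $\textit{cost}(\varphi_{Y'}^m) < \textit{cost}(\varphi_{X'}^m)$. Set $Y = Y' \cup \{m\}$. Since $Y' \subseteq [1,m-1]$ and $m \le n-1$, we get $Y \in \mathbb{X}^{k-1}_{n-1}$, with $m$ as its largest element. Decomposing $\textit{cost}(\varphi_Y^n)$ as above gives
\[
\textit{cost}(\varphi_Y^n) = \textit{cost}(\varphi_{Y'}^m) + c(m+1,n) < \textit{cost}(\varphi_X^n).
\]
This contradicts the optimality of $\varphi_X^n$. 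The optimality of $\varphi_X^n$ can be read over normalized functions, or over all of $\Phi^k_n$; Theorem~\ref{thm:representation} makes the two notions equal.

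The main obstacle is bookkeeping rather than ideas. We must check that the restriction really is a normalized function on the smaller range, which needs $m \notin X'$ and the ordering $x_{k-2} < m$. We must also check that the glued set $Y$ has the right cardinality and lies in $[1,n-1]$. Both follow from strict ordering and $m \le n-1$. The edge case $k=2$ also needs care, since there the prefix is the single interval $[1,m]$ and the middle sum is empty.
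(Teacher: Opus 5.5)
Your proposal is correct and is essentially the paper's own cut-and-paste argument: a strictly cheaper $(k-1)$-interval solution on $[1,x_{k-1}]$, glued to the unchanged last interval $[x_{k-1}+1,n]$, would give a cheaper $k$-interval solution on $[1,n]$ and so contradict the optimality of $X$. Your extra bookkeeping makes explicit what the paper's short proof leaves implicit, namely checking that $Y' \cup \{x_{k-1}\} \in \mathbb{X}^{k-1}_{n-1}$ and invoking Theorem~\ref{thm:representation} so the cheaper competitor may be taken to be normalized.
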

\begin{proof}
    Suppose for contradiction that $\varphi_{X'}^{x_{k-1}}$ is not optimal for the range $[1, x_{k-1}]$ with $k-1$ intervals, and let $\varphi_{\tilde{X}'}^{x_{k-1}}$  be a strictly cheaper solution.
    Then $\tilde{X} = \tilde{X}' \cup \{x_{k-1}\}$ yields a $k$-interval solution for $[1,n]$ with cost
    \[
        \textit{cost}(\tilde{X}') + c(x_{k-1}+1, n) < \textit{cost}(X') + c(x_{k-1}+1, n) = \textit{cost}(X),
    \]
    contradicting the optimality of $X$.
\end{proof}

\begin{theorem}[Correctness of $\mathbf{Y}$]
    For all $j \in [1,n]$ and $l \in [1,k]$, the entry $y_{j,l}$ equals the minimum cost of any $l$-partition of $[1,j]$.
\end{theorem}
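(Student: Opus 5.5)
I would prove the statement by induction on $l$, and for fixed $l$ argue uniformly over all $j \in [1,n]$. Throughout, an \emph{$l$-partition of $[1,j]$} means a sequence $0 = x_0 < x_1 < \cdots < x_l = j$. Its cost is $\sum_{m=1}^{l} c(x_{m-1}+1, x_m)$, which by Lemma~\ref{lem:cost-equivalence} is the cost of the normalized specialization $\varphi_{\{x_1,\ldots,x_{l-1}\}}^{j}$. Where no $l$-partition exists, namely when $l > j$, the minimum is read as the infeasibility value $M$. I would state this convention explicitly before starting, since $y_{1,l} = M$ for $l \ge 2$ is defined that way.

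\textbf{Base case $l = 1$.} The only $1$-partition of $[1,j]$ is the single interval $[1,j]$, whose cost is $c(1,j) = y_{j,1}$. For the boundary row $j = 1$ with $l \ge 2$, no partition of $[1,1]$ into $l \ge 2$ non-empty intervals exists, which matches $y_{1,l} = M$.

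\textbf{Inductive step.} Assume $y_{i,l}$ equals the optimal $l$-partition cost of $[1,i]$ for every $i$. Fix $j \ge 2$ and prove two inequalities.

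\begin{itemize}
\item \emph{Every $(l+1)$-partition costs at least $y_{j,l+1}$.} Take any $(l+1)$-partition $x_0 < \cdots < x_{l+1} = j$ and set $i = x_l$, so that $1 \le i < j$. Its prefix is an $l$-partition of $[1,i]$, and its cost splits as (prefix cost) $+\, c(i+1, j)$. The prefix cost is at least $y_{i,l}$ by the hypothesis, so the total is at least $y_{i,l} + c(i+1,j) \ge y_{j,l+1}$.
\item \emph{Some $(l+1)$-partition costs at most $y_{j,l+1}$.} Let $i^*$ attain the minimum in the recurrence. If $y_{i^*,l}$ is finite, it is realized by an optimal $l$-partition of $[1,i^*]$. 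Appending the interval $[i^*+1, j]$ gives a feasible $(l+1)$-partition of $[1,j]$ whose cost is exactly $y_{j,l+1}$.
\end{itemize}

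This mirrors the Optimal Substructure lemma, which I would cite as motivation. The two inequalities, however, give the full equality directly and do not require presupposing an optimal solution.

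\textbf{Main obstacle: the infeasible entries.} The delicate point is handling the infeasible entries with the constant $M$. When $l+1 > j$, every candidate $i < j$ satisfies $l > i$, so each $y_{i,l}$ is "infeasible". I then need $y_{i,l} + c(i+1,j)$ to still be recognized as infeasible, rather than a finite value that could win the minimum. I would handle this by treating $M$ as $+\infty$ with $M + x = M$. An alternative is to choose $M$ larger than $(l+1)$ times the total possible cost $s(n) F_n$, so that any sum containing an $M$ term exceeds every genuinely feasible value. Under that choice, the minimum is finite exactly when a feasible partition exists. The inductive equality then holds with "equals" meaning "equals $M$, i.e.\ infeasible" in those cells. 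Finally, I would note that the statement for $j = n$, $l = k$, combined with Theorem~\ref{thm:representation}, yields that $y_{n,k}$ is the optimum of the node specialization problem.
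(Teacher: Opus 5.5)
Your proof is correct and uses the same idea as the paper's: split any partition of $[1,j]$ at its last cut point $i$ and apply the induction hypothesis to the prefix $[1,i]$. The paper inducts on $j$ rather than on $l$, but either works since $y_{j,l+1}$ depends only on entries $y_{i,l}$ with $i<j$, and your explicit two-inequality argument and treatment of $M$ as an absorbing infeasibility value tighten the paper's proof. The paper silently assumes $y_{i,l-1}+c(i+1,j)$ stays infeasible when $y_{i,l-1}=M$, whereas literally, for example, $y_{2,3}=M+c(2,2)$, which is not $M$ in general.
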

\begin{proof}
    By induction on $j$.

    \textbf{Base case ($j = 1$).} The unique $1$-partition of $[1,1]$ has cost $c(1,1)$, so $y_{1,1} = c(1,1)$ is correct. For $l > 1$, no $l$-partition of a single element exists, so $y_{1,l} = M$ is correct.

    \textbf{Inductive step ($j > 1$).} Assume the claim holds for all $j' < j$. For $l = 1$, the unique $1$-partition of $[1,j]$ is the single interval with cost $c(1,j)$, which is correctly recorded by $y_{j,1}$. For $l > 1$, every $l$-partition of $[1,j]$ consists of an $(l-1)$-partition of $[1,i]$ for some $1 \leq i < j$, followed by the interval $[i+1,j]$. By the induction hypothesis, $y_{i,l-1}$ gives the minimum cost of any $(l-1)$-partition of $[1,i]$. The recurrence
    \[
        y_{j,l} = \min_{1 \leq i < j}\!\left(y_{i,l-1} + c(i+1,j)\right)
    \]
    therefore yields the minimum cost among all such decompositions, completing the induction.
\end{proof}

To reconstruct the actual optimal $k$-partition, we compute witness matrix $\mathbf{W}$ alongside $\mathbf{Y}$ for the recursive case that contains following witnesses:
\begin{align*}
    w_{j, l+1} & = \arg \min_{1 \leq i < j} \left ( y_{i, l} + c(i+1,j) \right) & j \in [2,n], l \in [1,k-1]
\end{align*}
where $\arg \min$ returns the smallest $i$ among all minimal indices, and other entries of the matrix can be set to arbitrary values, such as zero.

The specializations in the set $X$ for the normalized specialization function $\varphi_X^n$ with $X=\{x_1, \ldots, x_{k-1}\}$ are then recovered from the witnesses by back-tracking:
\begin{align}
    x_{k-1} & = w_{n,k}                           \\
    x_{l}   & = w_{x_{l+1}, l+1} & l \in [1, k-2]
\end{align}

\begin{lemma}[Witness–Partition Correspondence]\label{lem:witness-partition}
    Let $\varphi_X^n$ with $X = \{x_1, \ldots, x_{k-1}\}$ be an optimal normalized
    specialization function recovered by back-tracking through $\mathbf{W}$:
    \begin{align*}
        x_{k-1} & = w_{n,\,k},                               \\
        x_l     & = w_{x_{l+1},\,l+1}, \quad l \in [1, k-2].
    \end{align*}
    Then for all $l \in [1, k-1]$, the witness $w_{x_{l+1},\, l+1}$ equals $x_l$, i.e.,
    \[
        w_{x_{l+1},\, l+1} = x_l.
    \]
\end{lemma}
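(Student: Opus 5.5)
The statement is nearly definitional, so the plan is to unwind the back-tracking definition. The substantive content I will aim for is that the recovered $X$ is a genuine $k$-partition whose cost equals $y_{n,k}$. That is what licenses calling it an optimal normalized specialization function.

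\textbf{Step 1: the identity.} Set $x_k := n$. The back-tracking rule then gives $x_{k-1} = w_{n,k} = w_{x_k,k}$ and $x_l = w_{x_{l+1},l+1}$ for $l \in [1,k-2]$. Together these are exactly $w_{x_{l+1},l+1} = x_l$ for every $l \in [1,k-1]$. The case $l = k-1$ uses the convention $x_k = n$, matching the convention already introduced in the proof of Lemma~\ref{lem:cost-equivalence}.

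\textbf{Step 2: well-definedness and strict decrease.} I would show by downward induction on $l$ that each lookup $w_{x_{l+1},l+1}$ is taken in a range where $w$ is defined by the $\arg\min$, and not an arbitrary filler entry. Since $w_{j,l+1}$ is an $\arg\min$ over $1 \le i < j$, we get $1 \le x_l < x_{l+1}$, giving $1 \le x_1 < \cdots < x_{k-1} < n$.

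The point needing care is that $x_{l+1} \ge 2$ whenever $l+1 \ge 2$, so the recursive definition of $w$ applies. More strongly, feasibility requires $x_{l+1} \ge l+1$. This is the main obstacle, because the infeasibility constant $M$ must actually exclude choices such as $i$ with $y_{i,l} = M$. I would first argue that $y_{i,l} \ge M$ exactly when $i < l$, by induction using the Correctness of $\mathbf{Y}$ theorem. I would then assume $M$ exceeds every feasible cost, for instance $M > s(n)F_n$. Under these two facts, whenever $n \ge k$ the $\arg\min$ at $(j,l+1)$ with $j \ge l+1$ is attained at some $i \ge l$. Hence $x_l \ge l$, and the recursion stays in the feasible region all the way down to $x_1 \ge 1$.

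\textbf{Step 3: optimality of the recovered partition.} By the witness definition,
\[
y_{x_{l+1},l+1} = y_{x_l,l} + c(x_l+1, x_{l+1}).
\]
Telescoping from $l = k-1$ down to $l = 1$, and using $y_{x_1,1} = c(1,x_1)$, gives
\[
y_{n,k} = c(1,x_1) + \sum_{l=2}^{k} c(x_{l-1}+1, x_l).
\]
By Lemma~\ref{lem:cost-equivalence}, the right-hand side equals $\textit{cost}(\varphi_X^n)$. By the Correctness of $\mathbf{Y}$ theorem together with Theorem~\ref{thm:representation}, $y_{n,k}$ is the optimum. So the back-tracked $X$ is optimal and satisfies the stated witness correspondence.
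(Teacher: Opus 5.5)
Your proposal is correct, but it argues in the opposite direction from the paper's proof.

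The paper fixes an optimal $X$ and uses downward induction on $l$. The Correctness of $\mathbf{Y}$ theorem (base case) and repeated use of the Optimal Substructure lemma (inductive step) show that each $x_l$ is \emph{a} minimizer of the recurrence defining $y_{x_{l+1},l+1}$. From this the paper concludes $w_{x_{l+1},l+1}=x_l$.

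You instead note that, because $X$ is defined by back-tracking, the identity is definitional once you set $x_k:=n$. You then put the real work into two things. First, the lookups are well defined: the sentinel $M>s(n)F_n$ forces $l\le x_l<x_{l+1}$, so every witness consulted lies in the region where $w$ is defined by the $\arg\min$. Second, the back-tracked set is optimal: telescoping $y_{x_{l+1},l+1}=y_{x_l,l}+c(x_l+1,x_{l+1})$ gives $y_{n,k}=\textit{cost}(\varphi_X^n)$ via Lemma~\ref{lem:cost-equivalence}.

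Your route needs no Optimal Substructure lemma and is unaffected by ties. It also checks the sentinel behaviour that the paper leaves implicit, including the fact that the Correctness theorem itself needs $M$ to exceed every feasible cost.

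The paper's route targets a statement of independent interest: every split point of an optimal partition minimizes the corresponding recurrence. However, its last step, from ``$x_l$ is a minimizer'' to ``$w_{x_{l+1},l+1}=x_l$'', holds only because $X$ was back-tracked to begin with, or if the minimizer is unique. The $\arg\min$ returns the smallest minimizing index, and an arbitrary optimal partition may use a different one.

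One small tightening in your Step~2: say that \emph{every} minimizer at $(j,l+1)$ with $j\ge l+1$ satisfies $i\ge l$. This holds because terms with $i<l$ are $\ge M$, while the term with $i=j-1$ is $<M$. That is what guarantees the smallest-index $\arg\min$ is $\ge l$.
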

\begin{proof}
    We proceed by downward induction on $l$, from $l = k-1$ down to $l = 1$.

    \textbf{Base case ($l = k-1$).}
    By the definition of the recurrence,
    \[
        y_{n,\,k} = \min_{1 \leq i < n}\!\left( y_{i,\,k-1} + c(i+1,\, n) \right),
    \]
    and the witness records
    \[
        w_{n,\,k} = \arg\min_{1 \leq i < n}\!\left( y_{i,\,k-1} + c(i+1,\, n) \right).
    \]
    By the Correctness Theorem, $y_{n,k}$ equals the minimum cost of any $k$-partition
    of $[1,n]$.
    The optimal normalized specialization $\varphi_X^n$ achieves this minimum, and its last interval is $[x_{k-1}+1, n]$, so the split point $i = x_{k-1}$ is a minimizer of the recurrence.
    Hence $w_{n,k} = x_{k-1}$.

    \textbf{Inductive step.}
    Suppose the claim holds for $l+1$, i.e., $w_{x_{l+2},\, l+2} = x_{l+1}$, so that
    back-tracking has correctly recovered $x_{l+1}$.
    We show $w_{x_{l+1},\, l+1} = x_l$.

    By the recurrence,
    \[
        y_{x_{l+1},\, l+1}
        = \min_{1 \leq i < x_{l+1}}\!\left( y_{i,\,l} + c(i+1,\, x_{l+1}) \right),
    \]
    with witness
    \[
        w_{x_{l+1},\, l+1}
        = \arg\min_{1 \leq i < x_{l+1}}\!\left( y_{i,\,l} + c(i+1,\, x_{l+1}) \right).
    \]
    By the Optimal Substructure Lemma applied repeatedly from the full range $[1,n]$
    down to the prefix $[1, x_{l+1}]$, the restriction $X' = \{x_1, \ldots, x_{l-1}\}$
    is an optimal $l$-partition of $[1, x_{l+1}]$ (using $x_l$ as the last split point
    before $x_{l+1}$).  Therefore the split $i = x_l$ achieves
    \[
        y_{x_l,\, l} + c(x_l + 1,\, x_{l+1}) = y_{x_{l+1},\, l+1},
    \]
    so $i = x_l$ is a minimizer of the recurrence, and hence
    \[
        w_{x_{l+1},\, l+1} = x_l. \qedhere
    \]
\end{proof}

\paragraph{Worst-Case Runtime Complexity.}
Computing $\mathbf{Y}$ requires filling $n \times k$ entries, each requiring at most $n$ comparisons, giving an overall complexity of $\mathcal{O}(kn^2)$.
If $k = \mathcal{O}(n)$, this reduces to $\mathcal{O}(n^3)$.
Precomputing all interval costs takes $\mathcal{O}(n^2)$, which is dominated by the recurrence and hence can be neglected in the worst-case runtime complexity.

\paragraph{Extension.}
The coverage constraint in Definition~\ref{def:specialization} permits the construction of a monotonic space function if some specializations with higher node-types are more space efficient than lower specializations.
For example, we use a dense node representation (storing up to $n$ children/values) in addition to a sparse representation that can have up to $i$ values/children, for all $i$, $1 \leq i \leq n$.
However, the dense implementation becomes more space-efficient at a certain number of values/children, since the dense representation encodes its values/children without storing their indices, whereas the sparse specialization requires indices.
In our implementation, after the specializations for node type 241, the dense representation for 256 node types becomes more efficient.
To ensure that the dynamic program can still be used, we construct a new monotonic space function $s': [1,n] \rightarrow \mathbb{N}$ from the original $s$ by
\[
    s'(i) \mapsto \min_{i \leq l \leq n} s(l),
\]
which selects the cheapest available specialization for each node type $i$.
If no cheaper representation exists, $s'(i) = s(i)$. This local choice does not affect the optimality of the specialization.

The construction of the new monotonic space function $s'\colon [1,n] \to \mathbb{N}$ from the original function $s$ by $s'(i) = \min_{i \leq l \leq n} s(l)$ selects the cheapest available specialization for each node type $i$.
If no cheaper representation exists, $s'(i) = s(i)$.
This local choice does not affect the optimality of the specialization.
The function $s'$ is monotone: for any $i \leq j$ in $[1,n]$, every index eligible when computing $s'(j)$ is also eligible when computing $s'(i)$, since the minimization for $s'(i)$ ranges over a super set of that for $s'(j)$.
Taking a minimum over a larger set can only decrease or preserve the value, so $s'(i) \leq s'(j)$.

For the optimization problem, the solution $\varphi$ obtained using $s'$ must be re-interpreted in a post-processing step.
The actual specialization $\varphi'$ remaps each specialization chosen under $s'$ to the corresponding minimizer under the original function $s$:
\[
    \varphi'(i) \mapsto \arg \min_{\varphi(i) \leq l \leq n} s(l)
\]

Another potential optimization is to consider the zero node where neither nodes nor values are stored.
Zero nodes may arise from changes to inner nodes and leaves and require special care, though they can be handled programmatically and do not affect the optimization problem.

\section{Implementation\label{sec:implementation}}

Our database uses a layered architecture in which each layer builds on the one below and exposes well-defined interfaces, mirroring the natural layering of the Verkle Trie specification itself.
The architecture comprises five layers: (1) Interface to Transaction Manager, (2) Tree Implementation, (3) Node Manager, (4) Storage, and (5) File I/O. Our architecture supports concurrent read and write operations.

\begin{figure*}[htbp]
    \centering
    \includegraphics[width=\textwidth]{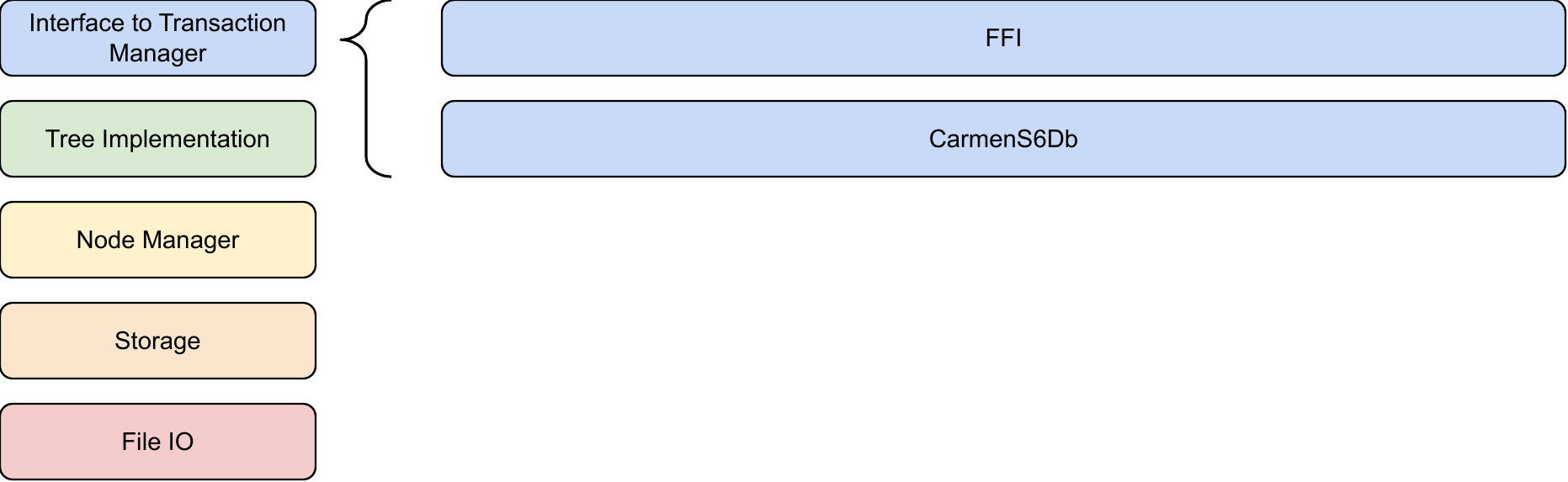}
    \caption{Layered Architecture - Interface to Transaction Manager}
    \label{fig:carmen-layers-2}
\end{figure*}

\paragraph{Transaction Manager Interface.}
The highest layer is the transaction manager interface, which provides the EVM with a way to query and update the database.
The interface is split into the database and the state interfaces.
The former provides functions for opening and closing the DB, creating checkpoints, and accessing live and archive states.
This is implemented in \href{https://github.com/0xsoniclabs/carmen/blob/main/rust/src/lib.rs}{lib.rs}.
The state interface provides functions for querying and updating the state, such as getting and setting account balances, nonces, storage slots, and code, and is implemented in \href{https://github.com/0xsoniclabs/carmen/blob/main/rust/src/database/verkle/state.rs}{database/verkle/state.rs}.
Because the transaction manager, as well as other components like the client, consensus, the VM, RPC, etc., are implemented in Go, the database and state interfaces are exposed to Go via FFI implemented in the file \href{https://github.com/0xsoniclabs/carmen/blob/main/rust/src/ffi/exported.rs}{ffi/exported.rs}.

\begin{figure*}[htbp]
    \centering
    \includegraphics[width=\textwidth]{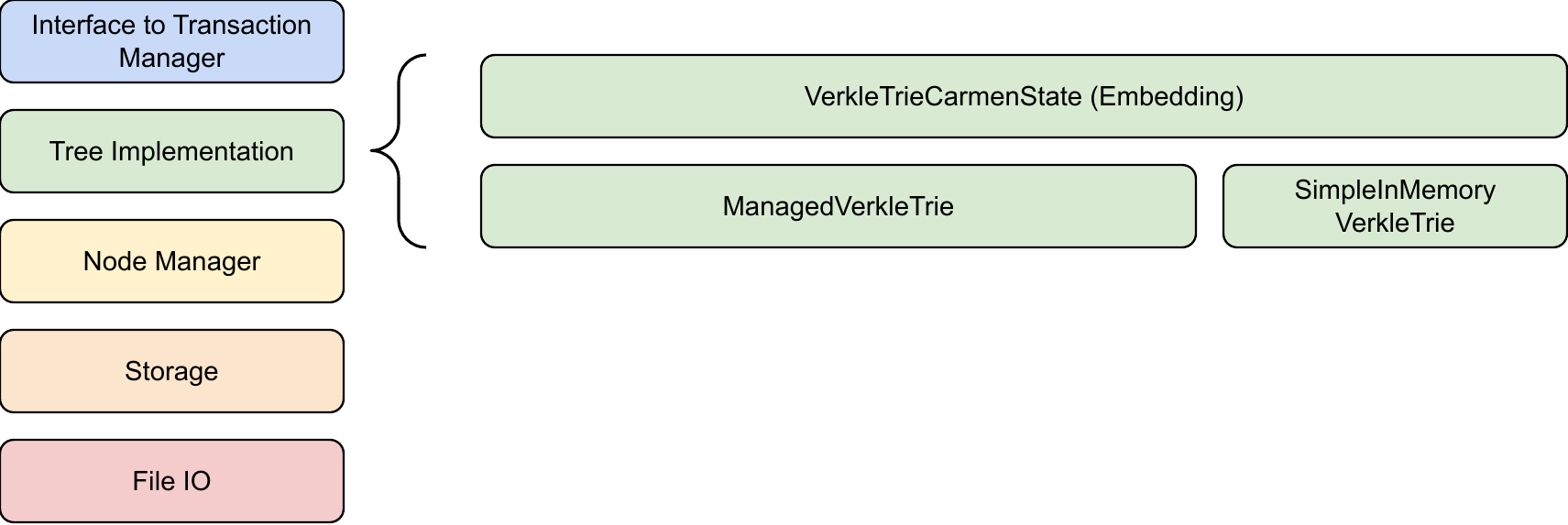}
    \caption{Layered Architecture - Tree Implementation}
    \label{fig:carmen-layers-1}
\end{figure*}

\paragraph{Tree Implementation}.
The tree implementation provides an interface for reading and writing key-value pairs stored in the Verkle Trie and for computing root commitments.
It consists of two sub-layers: the embedding layer and the Verkle Trie layer.
The embedding layer maps the Ethereum state to key-value pairs.
These key-value pairs are then stored using the Verkle Trie layer.
Computing the key for a value is expensive, as it requires computing a Pedersen commitment. Therefore, we memoize computed keys to avoid redundant computations.
The embedding can be found in \href{https://github.com/0xsoniclabs/carmen/tree/main/rust/src/database/verkle/embedding}{database/verkle/embedding}.
There are two implementations of the verkle trie layer: an in-memory implementation for testing and a main \texttt{managed-trie} implementation that uses the node manager to access nodes and the storage layers to persist them.
The in-memory implementation can be found in \href{https://github.com/0xsoniclabs/carmen/tree/main/rust/src/database/verkle/variants/simple}{database/verkle/variants/simple} and the \texttt{managed-trie} implementation in \href{https://github.com/0xsoniclabs/carmen/tree/main/rust/src/database/verkle/variants/managed}{database/verkle/variants/managed}.
Generic tooling for working with managed tries, independent of the managed Verkle Trie, can be found in \href{https://github.com/0xsoniclabs/carmen/tree/main/rust/src/database/managed_trie}{database/managed\_trie}.

\begin{figure*}[htbp]
    \centering
    \includegraphics[width=\textwidth]{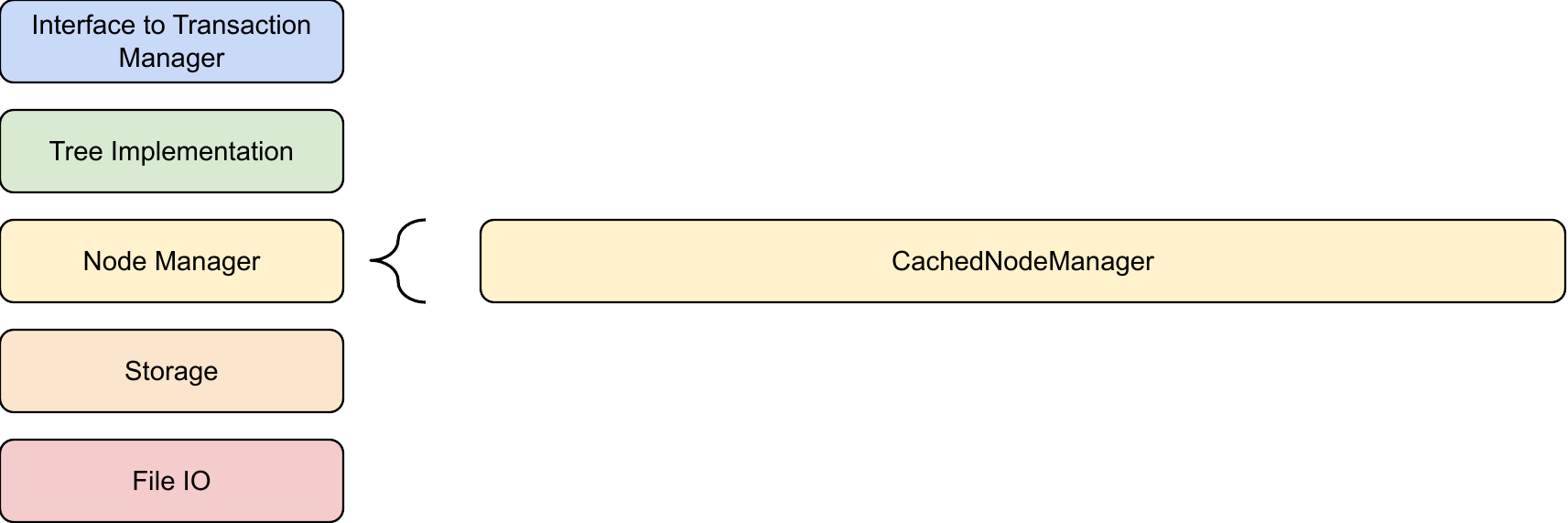}
    \caption{Layered Architecture - Node Manager}
    \label{fig:carmen-layers-3}
\end{figure*}

\paragraph{Node Manager.}
The node manager owns all in-memory node instances and provides functions to create, delete, and access nodes.
The node manager ensures that only a single copy of each node exists in memory at any time and provides access to them via read and write guards.
It is an interface, of which there are two implementations:
an \textit{in-memory} node manager, which keeps all nodes in memory, used only for testing,
and a \textit{cached} node manager with limited capacity that evicts nodes, writing them to disk via the storage layer.

The \textit{cached} node manager uses \texttt{quick\_cache} \cite{quickcache}, a high-performance concurrent cache for Rust, as its cache.
It uses an eviction algorithm derived from CLOCK-PRO \cite{clock-pro} and optimized for weak access patterns, such as loops and scans.
The main difference with the well-known LRU algorithm is that, instead of using the \textit{recency} of an item, i.e., the distance from the last inserted element for deciding which element to evict, it uses a metric called \textit{reuse distance}, defined as the distance from the last time that specific element was requested.
This allows \texttt{quick\_cache} to avoid inserting elements that are requested only once in the cache, as their reuse distance is infinite.
The main implication is that inserting an element into a full cache does not guarantee that it will be inserted into the main cache allocation.
Instead, \texttt{quick\_cache} inserts it into a secondary allocation called the \textit{ghost pool}, which tracks previously requested elements.
Whenever an element is requested and not present in the cache, it is looked up in the ghost pool to calculate its reuse distance and, if it is not found, inserted into the main cache allocation.
However, this behavior is problematic for the node manager, as node ownership is delegated to the manager; therefore, it must ensure that the node remains in the cache after insertion.
Therefore, elements are explicitly pinned before insertion.
\newline
Much of the complexity is hidden in the \texttt{LockCache} component that wraps \texttt{quick\_cache} and provides  read\/write access guards to its elements,
which are stored in a fixed-size array as \texttt{RWLock}, while \texttt{quick\_cache} stores the array occupied slots.

This decoupling is necessary because stable storage locations are required to hand out lock guards whose lifetimes are bound by the cache's lifetime.
\texttt{quick\_cache} only returns copies of its internal elements, and therefore, a guard to it would refer to a temporary object.
This structure has the great advantage that locking happens only in the \texttt{LockCache}, which concentrates all synchronization logic in a single place.
Moreover, because only guards are given away, owning a write guard guarantees that no other reference to the node exists.
This is especially important when updating the trie, as it allows for safe modifications of the tree.
\newline

The \texttt{CachedNodeManager} implements the Node Manager interface, wrapping the \texttt{LockCache} and a storage backend.
When a node is requested, the node manager first checks the cache; if it is not found, it retrieves it from the storage backend and caches it.
It is also responsible for keeping the storage layers up to date when nodes are modified or deleted by flushing the changes to disk only when changes on the node are detected, i.e., when a write guard is dropped.  All of these components can be found in \href{https://github.com/0xsoniclabs/carmen/tree/main/rust/src/node_manager}{node\_manager}.

\begin{figure*}[htbp]
    \centering
    \includegraphics[width=\textwidth]{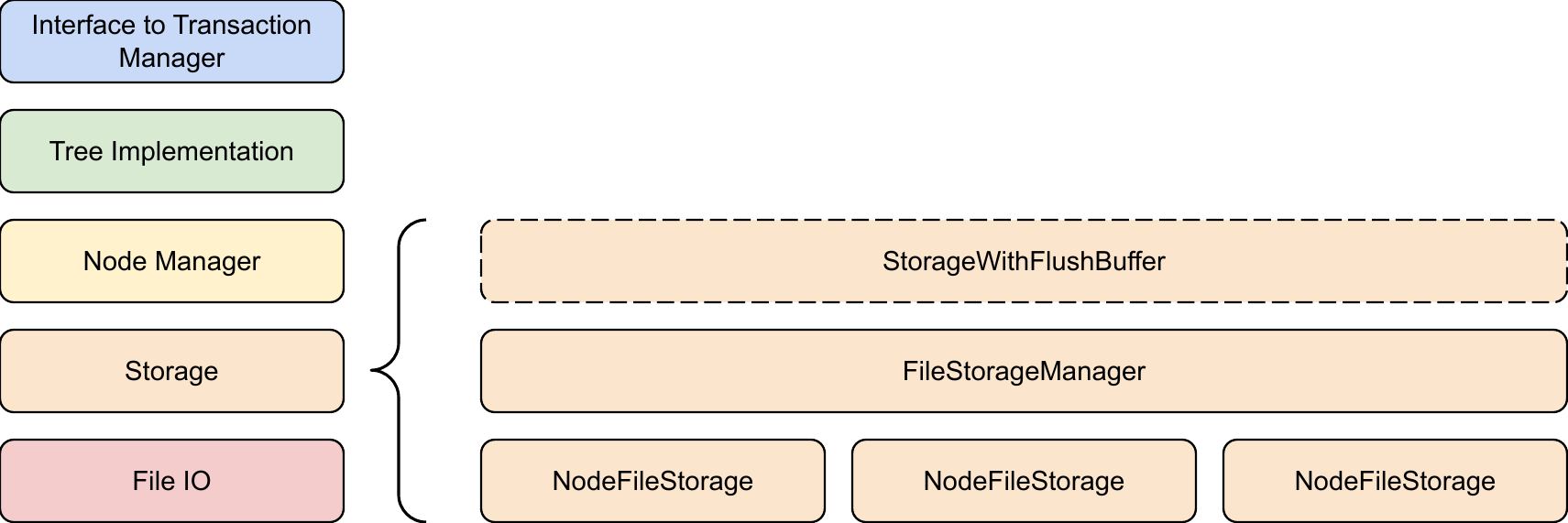}
    \caption{Layered Architecture - Storage}
    \label{fig:carmen-layers-4}
\end{figure*}

\paragraph{Storage Layers.}
The storage layers sit on top of the file layers and are responsible for persisting nodes to disk and loading them into memory upon request.
The nodes are stored in flat files as plain old data.
Nodes have to implement an interface that allows them to be converted to and from bytes.
This way, it is possible to use the in-memory representation of the nodes directly, which incurs minimal overhead, or (as is the case for the main implementation) convert them to a more compact format and serialize this format to disk.
There is one file per node type, so all nodes stored in a given file have the same size.
This enables fast random access by computing a node's offset from its ID.
Each node ID encodes both the file index and the node type.
The offset can be obtained by extracting the index from the ID and then multiplying it by the node type size.

NodeFileStorage manages a file for a specific node type and provides functions to read and write nodes to that file via the file layer.
It also keeps track of the next available offset for new nodes and maintains a free list of deleted nodes to reuse their space.
It is implemented in \href{https://github.com/0xsoniclabs/carmen/tree/main/rust/src/storage/file/node_file_storage}{storage/file/node\_file\_storage}

FileStorageManager multiplexes multiple NodeFileStorage instances, one for each node type, and provides a unified interface for reading and writing the node enum.
It also manages checkpoint creation, which ensures the database's consistency and durability.
The implementation of FileStorageManager can be found in \href{https://github.com/0xsoniclabs/carmen/tree/main/rust/src/storage/file/file_storage_manager}{storage/file/file\_storage\_manager}

StorageWithFlushBuffer is a purely optional layer that wraps a storage backend and adds a flush buffer on top of it.
When nodes are modified or deleted, they are not immediately written to disk; instead, they are stored in the flush buffer and asynchronously written to disk by background workers, thereby accelerating cache eviction and ultimately reducing latency.
It also ensures that nodes evicted from the cache can be read from this buffer until they are guaranteed to be persisted to disk.
The implementation can be found in \href{https://github.com/0xsoniclabs/carmen/blob/main/rust/src/storage/storage_with_flush_buffer.rs}{storage/storage\_with\_flush\_buffer.rs}

\begin{figure*}[htbp]
    \centering
    \includegraphics[width=\textwidth]{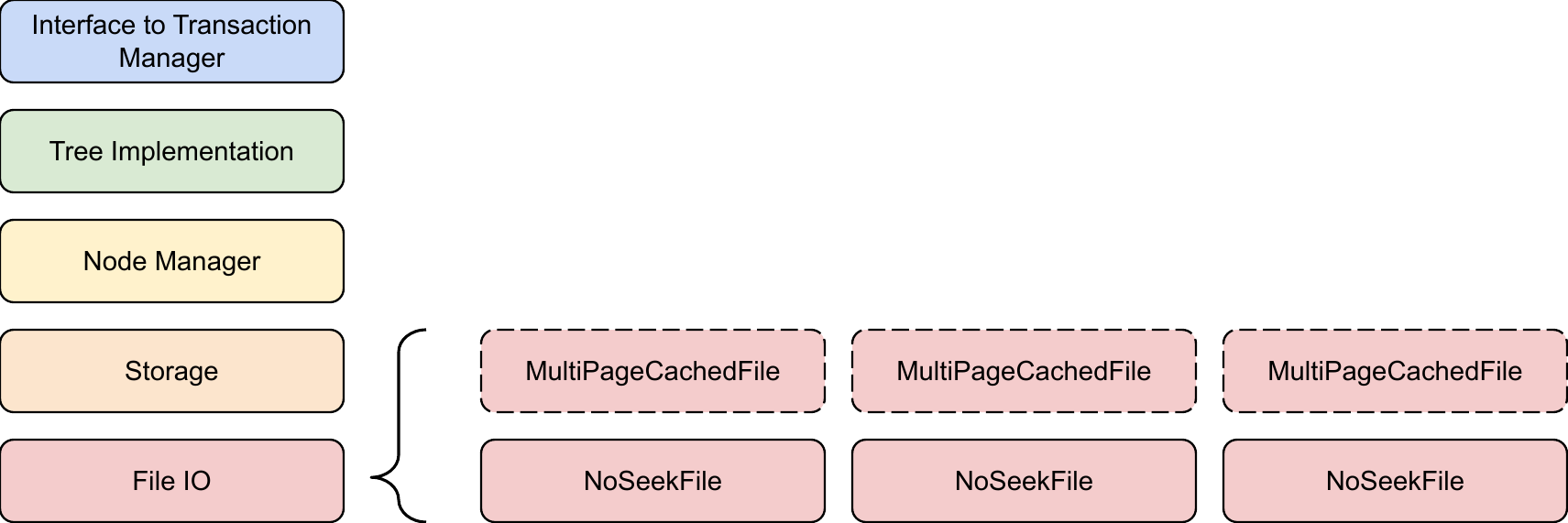}
    \caption{Layered Architecture - File I/O}
    \label{fig:carmen-layers-5}
\end{figure*}

\paragraph{File Layers.}
The file layer is the lowest level and is responsible for reading and writing raw bytes to and from disk.
Two primary implementations are provided, which differ in their support for parallel operations.
\textsc{SeekFile} utilizes the \texttt{read}~\cite{linux-read}, \texttt{write}~\cite{linux-write}, and \texttt{lseek}~\cite{linux-lseek} system calls and is compatible with all platforms.
All accesses are serialized using a mutex, which prevents parallel execution.
\textsc{NoSeekFile} employs the \texttt{pread} and \texttt{pwrite} system calls~\cite{linux-pread}, which are available exclusively on Unix systems.
Because these calls are cursor-independent, concurrent reads and writes to non-overlapping file regions can proceed in parallel.
They also halve the number of system calls by eliminating the need for a preceding seek.
Concurrent access to overlapping offsets is prevented by a fine-grained locking mechanism that maps offsets to mutexes.
Two optional caching proxies, \textsc{PageCachedFile} and \textsc{MultiPageCachedFile}, can wrap either base implementation using a unified interface.
These proxies serve requests from an in-memory buffer containing either a single 4 KiB page or multiple pages, respectively.
When the target data is already cached, no system calls are required.
On a page miss, if the current page is dirty, it is written back before loading the new page.
Both proxies support direct I/O to bypass the operating system page cache. \textsc{PageCachedFile} serializes all accesses using a mutex, whereas \textsc{MultiPageCachedFile} permits concurrent accesses across different pages.
All four implementations, along with the shared interface, are available in the storage file backend module~\cite{carmen-repo}.

\subsection{Trie Update and Commitment Pipeline}

Block finalization requires the database to complete two interdependent operations within Sonic's 300-millisecond block window: applying all state mutations produced by the block's transactions to the Verkle Trie, and recomputing the cryptographic state root commitment over the resulting structure.
The two tasks are not independent: each inner node's commitment depends recursively on its children's commitments, so commitment recomputation cannot begin until the structural update is complete.
Yet together they must fit within the same latency budget that governs block production.
Meeting this budget requires optimizations at every level of the pipeline.
The optimizations range from how mutations are assembled and delivered to the Trie, how the trie traversal is structured to minimize node accesses and synchronization overhead, and how the elliptic-curve arithmetic underlying the Pedersen commitment scheme is executed to exploit the scheme's algebraic structure and the parallelism available in the hardware.

We introduce a three-stage pipeline that processes blocks into an updated state root and applies optimizations at each stage. The first stage is the responsibility of the embedding layer.
The second stage is carried out by the Verkle Trie layer in cooperation with the node manager.
The third stage, commitment recomputation, is performed within the node manager layer.
Each stage feeds directly into the next, and the design of each is constrained by both the requirements it must satisfy and the guarantees it can rely on from the stage preceding it.

\paragraph{Stage 1 — Batch assembly (embedding layer).}

Upon block completion, the transaction manager delivers the full set of state mutations accumulated during the block's execution: account nonces, balances, contract storage slot writes, and code updates.
In a naive implementation, each mutation would be applied to the trie independently, requiring a separate root-to-leaf traversal for each key.
For a block containing $u$ mutations distributed across $p$ distinct leaf nodes, this yields $\mathcal{O}(u \cdot d)$ node accesses, where $d \leq 32$ is the maximum trie depth.
Since $u$ can reach several thousand for a computationally intensive block, the traversal cost alone would represent a significant fraction of the block window.

The embedding layer avoids this by transforming the entire mutation set into a single sorted \emph{update batch} before any trie access occurs.
Each mutation is first passed through the state embedding, which maps it to a $(k, v)$ pair in the trie's 32-byte key space. The resulting pairs are collected into a list and sorted lexicographically by key.
The cost of this preprocessing step is $\mathcal{O}(u \log u)$, which is dominated by the sort.
The embedding itself requires one Pedersen commitment computation per unique account address, with the results memoized to avoid redundant elliptic-curve operations across mutations that touch the same account.

Organizing mutations into a sorted batch before Trie access confers three structural advantages that compound across the subsequent stages.
First, multiple mutations to the same key within a single block are merged during batch assembly, so the trie is presented with at most one write per key per block, and intermediate values produced by intra-block state changes are never written to the trie at all.
Second, because the complete set of mutations destined for any given node is known before that node is first accessed, the node manager can determine in advance whether a structural transformation (e.g., promoting a sparse node to a larger specialization to accommodate additional slots) will be required, and if so, can allocate the target variant directly rather than performing a sequence of incremental promotions.
Third, and most significantly for the ArchiveDB, batching ensures that each node on any root-to-leaf path is subject to at most one copy-on-write operation per block, regardless of how many of its slots are written.
This bounds the number of fresh node identifiers allocated per block to the number of distinct paths traversed, and simplifies the version accounting logic considerably.

\paragraph{Stage 2 — Breadth-first trie traversal (Verkle trie layer).}

The sorted update batch is handed to the root node, which partitions it into sub-batches for each affected child.
Because the batch is sorted lexicographically by key, the first byte of each key determines which child receives each entry, and the partition can be computed by a single linear scan of the batch without copying any data: each sub-batch is represented as a contiguous slice of the original sorted list.
This partitioning recurses level by level, with each node at depth $\ell$ splitting its received sub-batch among its affected children at depth $\ell + 1$, until the sub-batches reach the extension nodes at the leaves.
The total work performed across all partitioning steps is $\mathcal{O}(u \cdot d)$ comparisons, the same asymptotic cost as the naive per-mutation traversal, but with substantially lower constant factors because the sorted order eliminates redundant path re-traversals and because cache locality is improved by processing sibling nodes together.

The traversal proceeds in a \emph{breadth-first, level-wise} order rather than depth-first.
This search order is required by the locking protocol because a node's structural variant may change as a result of its update.
For instance, when an incoming sub-batch causes the node to exceed its current specialization's capacity, necessitating promotion to a larger variant with a new storage identifier, the parent must be informed of the node's new identifier before the parent's own write lock is released.
Breadth-first traversal ensures that the parent's lock is still held when the child's identifier is updated.

Concurrent access during traversal is managed by a \emph{hand-over-hand locking} protocol.
At any point during the traversal, the node manager holds write locks on all nodes at depth $\ell - 1$ that are parents of at least one node currently being processed, and on all nodes at depth $\ell$ that are actively being updated.
Before descending to depth $\ell + 1$, locks are acquired on all nodes at that depth that will be needed.
Once all nodes at depth $\ell$ have been fully processed and their parent identifiers updated, the locks on depth $\ell - 1$ are released.
At any instant, locks therefore span at most three consecutive levels of the Trie.
As soon as the root's write lock is released at the conclusion of the traversal, concurrent read queries to sub-trees that were not modified by the current block can proceed without contention, permitting the RPC layer and the archive reader to serve historical queries in parallel with
the subsequent commitment recomputation.

When a node must be promoted to a larger specialization to accommodate an incoming write, the target specialization size is determined from the sub-batch size before the node is accessed.
The node manager, therefore, allocates the correct variant in a single operation, writes all incoming values directly into it, and releases the previous node identifier to the free list.
This avoids the sequence of intermediate allocations and releases that would occur if mutations were applied one at a time under a per-mutation traversal, and reduces the transient fragmentation of the
free list during high-write blocks.

\paragraph{Stage 3 — Commitment recomputation (node manager layer).}

Once the Trie traversal is complete and all structural updates have been applied, the state root commitment must be recomputed over the set of nodes dirtied during the update.
Commitment recomputation is the computationally dominant stage of the pipeline.
Each Pedersen commitment to a vector of $n$ field elements requires $n$ elliptic-curve scalar multiplications over the Bandersnatch curve, and each such multiplication is roughly two to
three orders of magnitude more expensive than a single SHA-256 evaluation.
For a dense inner node with $n = 256$ children, a full commitment recomputation from scratch would require 256 scalar multiplications, which at current hardware speeds takes on the order of several milliseconds.
This is well above the per-node budget, consistent with the 300-millisecond block window.
Three complementary optimizations are applied to bring the aggregate commitment cost within budget: homomorphic incremental updates, adaptive multi-scalar multiplication windowing, and multi-threaded work-stealing parallelism.

\paragraph{Homomorphic Incremental Updates.}
The Pedersen commitment scheme is additively homomorphic, and this property can be exploited to avoid full recomputation whenever only a subset of a node's slots have changed.
For a commitment $C$ to a vector in which position $i$ changes from $a_i$ to $a_i'$, the updated commitment satisfies
\begin{equation}
    C' = C + (a_i' - a_i) \cdot G_i,
    \label{eq:pedersen-update}
\end{equation}
requiring exactly one scalar multiplication and one group addition per modified slot, independent of the total vector length.
For inner nodes, this update is applied at each position where a child commitment has changed: if $\delta$ children are modified during a block, the inner node's commitment requires $\delta$ scalar multiplications rather than 256, reducing the cost by a factor of $256 / \delta$.
Since empirical data from Sonic's history shows that the number of modified children per inner node per block is typically small (see Section~\ref{sec:evaluation}), this reduction is substantial in practice.

For extension nodes, the two intermediate commitments $C_1$ and $C_2$ are maintained persistently in the node's cached in-memory representation alongside the final node commitment $C_{\mathrm{ext}}$. Whenever a value slot changes, $C_1$ or $C_2$ is updated via Equation~\eqref{eq:pedersen-update}, and $C_{\mathrm{ext}}$ is then updated incrementally from the new intermediate commitment, avoiding the recomputation of the stem scalar $\mathrm{Stem} \cdot G_2$ which, as a product of a 31-byte field element and a fixed generator, is the most expensive single term in Equation~\eqref{eq:leaf-commitment}.
The three cached values — $C_1$, $C_2$, and $C_{\mathrm{ext}}$ — together with the node's slot data constitute the node's in-memory footprint; they are not persisted to disk, and are reconstructed from the on-disk representation on first load.

\paragraph{Adaptive Multi-scalar Multiplication Windowing.}
The scalar multiplications underlying both the incremental update formula and the full recomputation path are computed using a windowed signed multi-scalar multiplication algorithm with Booth recoding~\cite{booth1951signed}.
In this algorithm, each scalar is recoded into a signed digit representation with digit values in $\{-(2^{w-1}-1), \ldots, 2^{w-1}-1\}$ using a window of width $w$ bits, and the required multiples of each generator point are precomputed and stored in a table of size $2^{w-1}$ per generator.
A larger window reduces the number of group operations per scalar at the cost of a larger precomputation table; the optimal window width therefore depends on both the cost of group operations and the number of generators actively involved in a given computation.

The node manager selects the window width adaptively. When only generators $G_1$ through $G_5$ are active, a window of width $w = 16$ is used, reducing the number of group operations to $\lceil 253 / 16 \rceil = 16$ per scalar at the cost of a precomputation table of $5 \times 2^{15} = 163{,}840$ points, which is feasible given that these five generators are fixed and their tables can be computed once at initialisation. This is the case for the four-term extension node commitment $C_{\mathrm{ext}}$ in Equation~\eqref{eq:leaf-commitment} and for the five-term key computation in the embedding layer.
For all other computations, where up to 256 generators may be active, a window of $w = 10$ is used, balancing operation count against table size.
Using $w = 16$ uniformly across all 256 generators would require $256 \times 2^{15} \approx 8.4$ million precomputed points, which exceeds the practical memory budget for a cache-resident node.

\paragraph{Multi-threaded Work-stealing Traversal.}
The commitment updates across all dirty nodes are independent of one another within a given level of the trie: the updated commitment of a node at depth $\ell$ depends on the updated commitments of its children at depth $\ell + 1$, but not on those of its siblings.
This level-wise dependency structure admits a natural parallel decomposition.
The node manager implements this using a \emph{work-stealing thread pool}: starting from the root, it recursively enqueues a commitment update task for each dirty child node, and each task in turn enqueues tasks for its own dirty children.
Worker threads dequeue and execute tasks as they become available; when a thread exhausts its local task queue, it steals tasks from other threads' queues.
A parent task blocks on a lightweight barrier until all its child tasks have completed, then computes its own commitment update from the children's results.
They can be processed concurrently without synchronization beyond the per-node barriers because sibling subtrees are structurally independent.

The work-stealing approach is effective when the dirty set is large enough to meaningfully exploit the hardware parallelism.
When the dirty set is small (e.g., in blocks that touch only a handful of accounts), the overhead of task creation, queuing, and inter-thread synchronization exceeds the computational cost
of the commitment updates themselves, and the parallelism yields a net
slowdown.
The node manager therefore measures the size of the dirty set before committing to the parallel path, and falls back to a sequential depth-first traversal over the dirty nodes when the set size falls below an empirically determined threshold.
This adaptive dispatch ensures that the parallel implementation does not penalize light blocks while still delivering throughput gains for the computationally intensive blocks that dominate Sonic's production workload.

\section{Evaluation\label{sec:evaluation}}

The goal of the experiments is to evaluate SonicDB S6 across two orthogonal dimensions: storage efficiency and throughput.
On the storage side, we quantify how occupancy-aware node specializations and delta nodes reduce the on-disk footprint of both the LiveDB and the ArchiveDB relative to an unoptimized baseline and to the Geth Verkle implementation.
On the throughput side, we measure how fast the database can process Sonic's historical block sequence and compare against a persistent Geth Verkle baseline extended with a LevelDB backend.

\paragraph{Metrics.}
Storage consumption is reported in gibibytes (GiB) at a fixed replay depth of 40 million blocks.
In Ethereum-compatible blockchains, \textit{gas} is the unit of computational work required to execute a transaction or smart contract operation.
Each operation---such as an arithmetic instruction, a storage read, or a contract call---consumes a fixed or dynamic amount of gas defined by the protocol specification~\cite{wood2014ethereum}.
A block is subject to a \textit{gas limit}, which caps the total computational work that can be included in a single block.
Throughput is therefore naturally expressed in terms of gas processed per unit of time rather than transactions per second, since transactions vary widely in their computational complexity and a single transaction may consume anywhere from 21{,}000 gas (a plain transfer) to several million gas (a complex smart contract interaction).
We report throughput in \textit{millions of gas per second} (Mgas/s), computed as the total gas consumed by all transactions in a replay window divided by the elapsed wall-clock time.
This metric captures both the database's ability to sustain high-frequency block ingestion and the workload's computational intensity, making it a more faithful indicator of production performance than block-per-second counts alone.
End-to-end wall-clock replay time is reported as a secondary throughput indicator.

\paragraph{Platform.}
All experiments were conducted on a commodity machine equipped with an AMD Ryzen 5 3600 CPU and 64~GB of RAM, running Go~1.25.7 and rustc~1.92.0.
This hardware configuration is intentionally representative of machines available to ordinary Sonic or Ethereum node operators, ensuring that results are reproducible outside of a datacenter setting.

\paragraph{Claims.}
The experiments are designed to substantiate three claims made in this paper.
First, the dynamic-programming-selected set of 10 node specializations reduces LiveDB storage by 97.8\%, from 1{,}078~GiB to 17.9~GiB, while 512 specializations reduce it further at the cost of measurable throughput degradation.
Second, that delta nodes, combined with node specializations, reduce ArchiveDB storage by 95\%, from 16{,}141~GiB to 809~GiB, with inner delta nodes responsible for the dominant share of that reduction.
Third, that the full SonicDB S6 stack achieves $3.2\times$ higher throughput than the persistent Geth Verkle baseline, while the ArchiveDB remains within approximately 24\% of LiveDB performance---fast enough to keep pace with Sonic's production block rate of 300 milliseconds per block.

\subsection{Node Type Frequency and Choosing k}

\begin{figure}[htbp]
    \begin{subfigure}{0.49\linewidth}
        \centering
        \includegraphics[width=\textwidth]{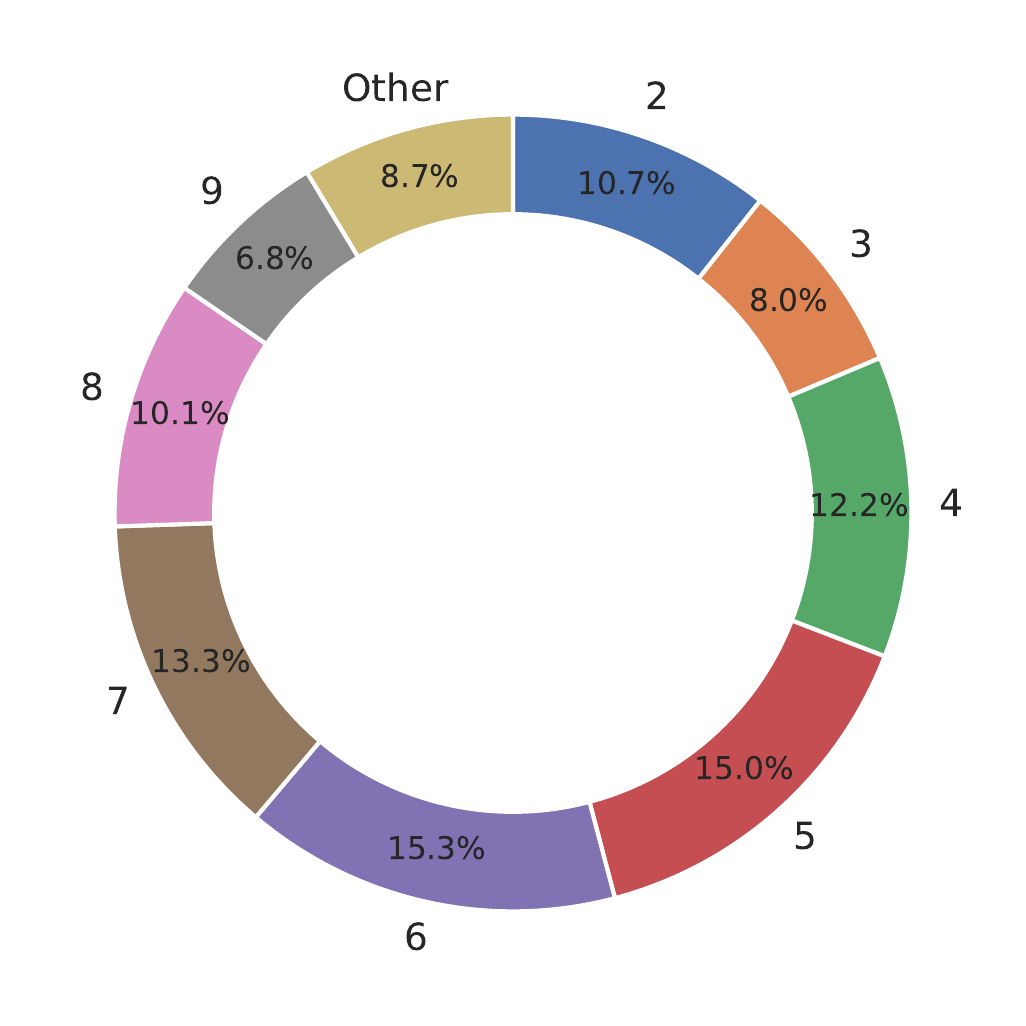}
        \caption{Inner}
        \label{fig:live-inner-node-slot-usage}
    \end{subfigure}
    \hfill
    \begin{subfigure}{0.49\linewidth}
        \centering
        \includegraphics[width=\textwidth]{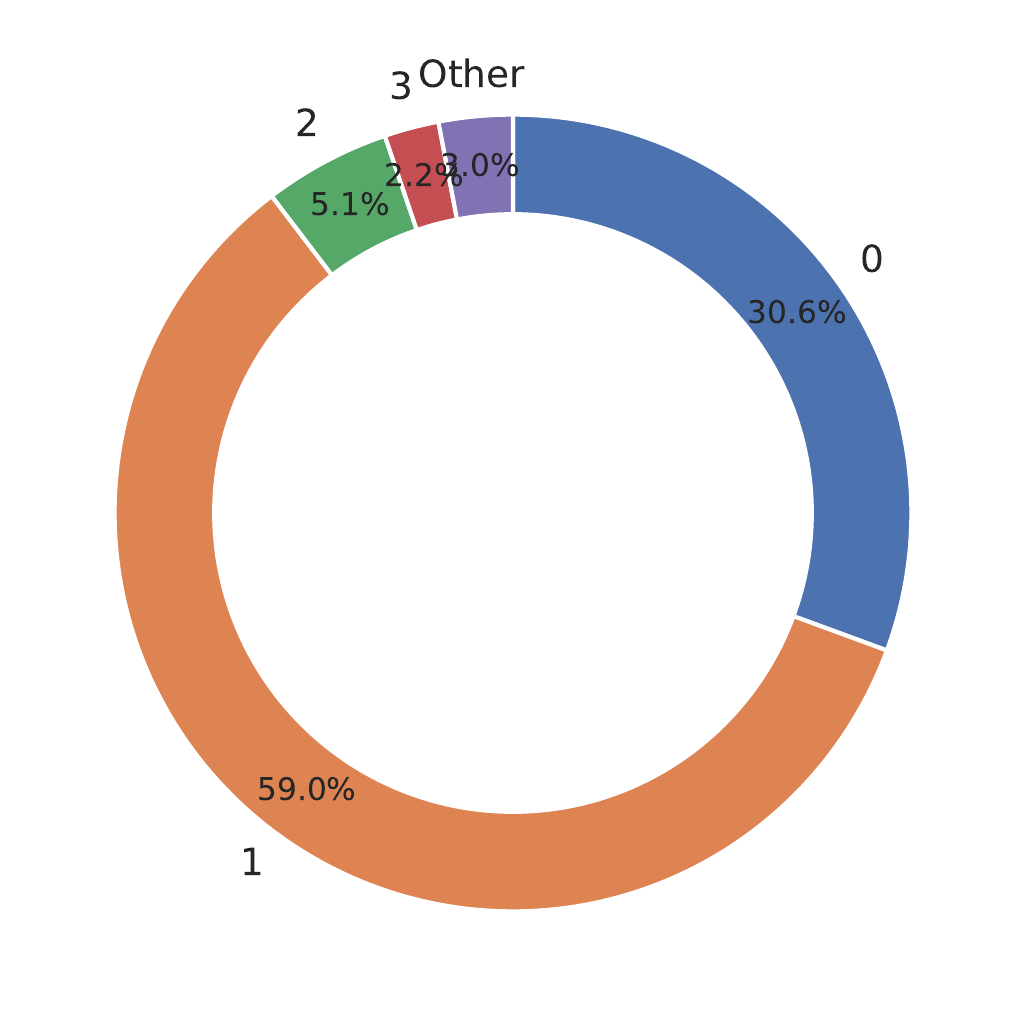}
        \caption{Leaf}
        \label{fig:live-leaf-node-slot-usage}
    \end{subfigure}
    \caption{LiveDB: distribution of actual slot usage in Verkle nodes}
    \label{fig:live-node-slot-usage}
\end{figure}

Figures~\ref{fig:live-node-slot-usage} and~\ref{fig:archive-node-slot-usage} show the distribution of actual slot usage across inner and leaf nodes for both the LiveDB and the ArchiveDB, measured over the first 55 million blocks of Sonic history.
The distributions are heavily skewed toward low occupancy in both cases.
In the LiveDB, 95.7\% of leaf nodes store at most 3 of the 256 values, and 87.6\% of inner nodes have no more than 11 children.
This extreme sparsity confirms that allocating the full 256-slot layout for every node would result in massive storage waste, and motivates the node specialization framework described in Section~\ref{sec:specialization}.
The ArchiveDB exhibits a similarly skewed distribution, though the copy-on-write update semantics introduce a broader tail, as nodes that are frequently modified accumulate more slots over successive versions.
Figure~\ref{fig:total-vs-reusable-node-counts} further illustrates the dynamic interplay between node creation and deletion of inner nodes in the LiveDB over the first 1.4 million blocks.
The figures show that as the trie evolves, nodes are repeatedly promoted to larger specializations when their occupancy grows, and their former storage slots are released to the free list rather than immediately reclaimed.

This behavior produces a characteristic pattern in which the number of reusable offsets rises sharply whenever a wave of specialization upgrades occurs, temporarily widening the gap between total allocated slots and slots that are actively storing data. There is a discrepancy: instead of being freed, nodes are put into free lists so that future node instantiations can reuse them. The figure shows that this gap can be substantial, underscoring that the choice of specialization granularity directly affects not only the per-node storage footprint but also the efficiency of space recycling.
Taken together, these observations establish the empirical foundation for the storage optimizations presented in this paper: the Trie is sparse, node occupancy is non-uniformly distributed, and the number of distinct specializations must be carefully balanced to avoid excessive fragmentation and reuse overhead.

\begin{figure}[htbp]
    \begin{subfigure}{0.49\linewidth}
        \centering
        \includegraphics[width=\textwidth]{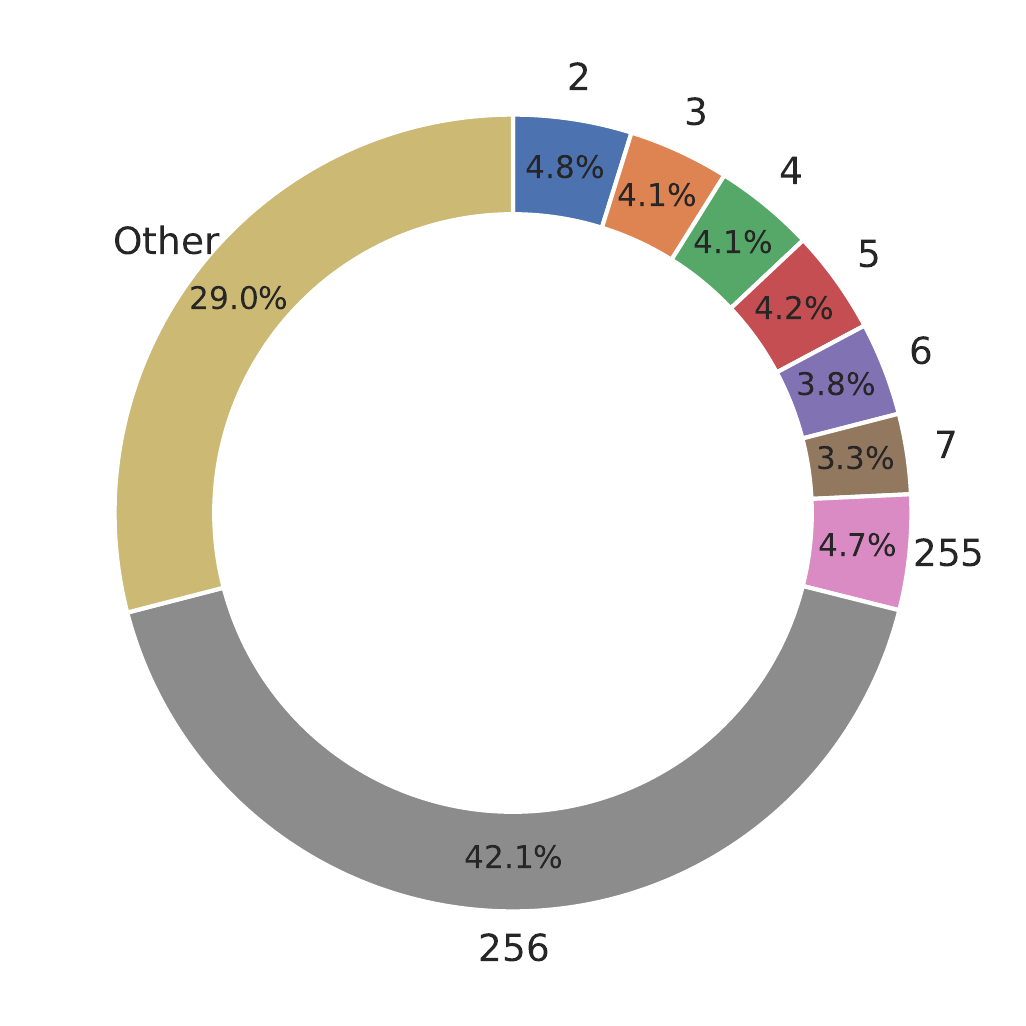}
        \caption{Inner}
        \label{fig:archive-inner-node-slot-usage}
    \end{subfigure}
    \hfill
    \begin{subfigure}{0.49\linewidth}
        \centering
        \includegraphics[width=\textwidth]{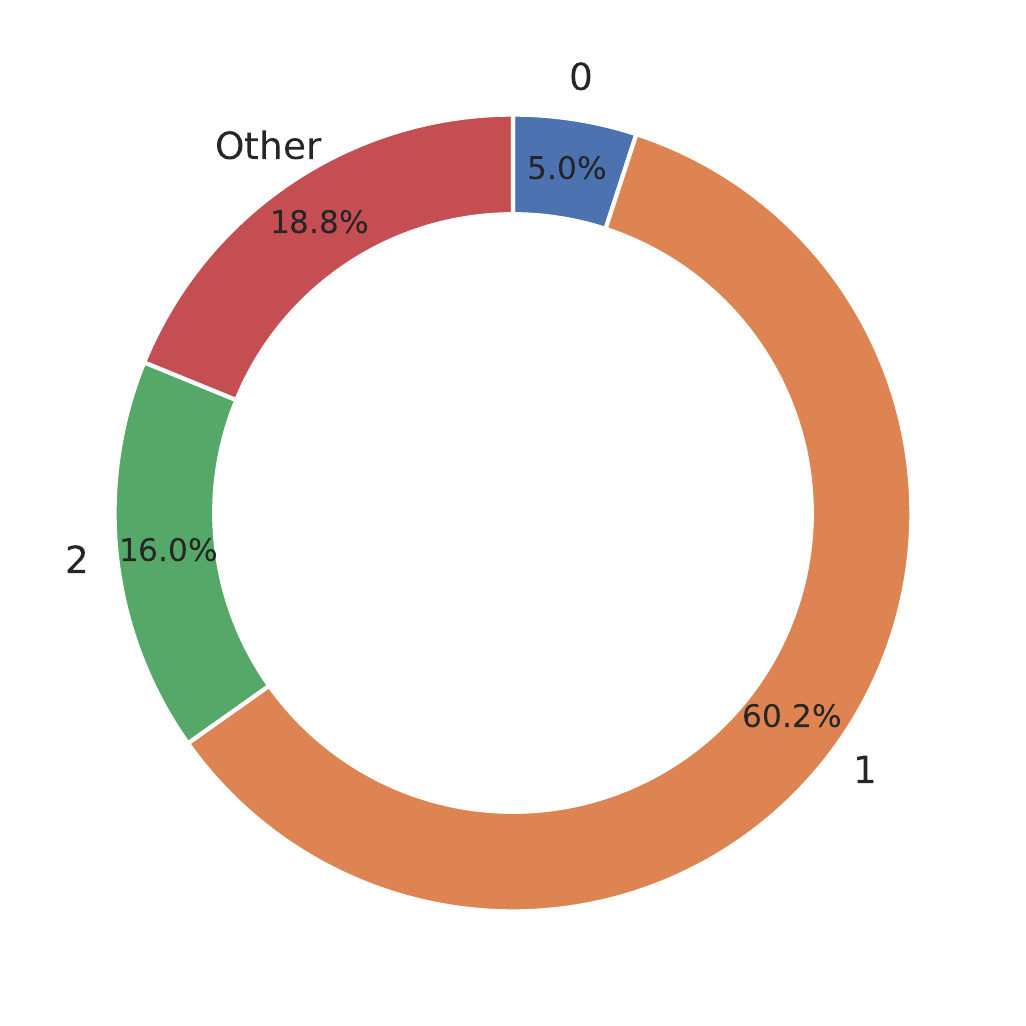}
        \caption{Leaf}
        \label{fig:archive-leaf-node-slot-usage}
    \end{subfigure}
    \caption{ArchiveDB: distribution of actual slot usage in Verkle nodes}
    \label{fig:archive-node-slot-usage}
\end{figure}

\begin{figure}[htbp]
    \centering
    \includegraphics[width=\textwidth]{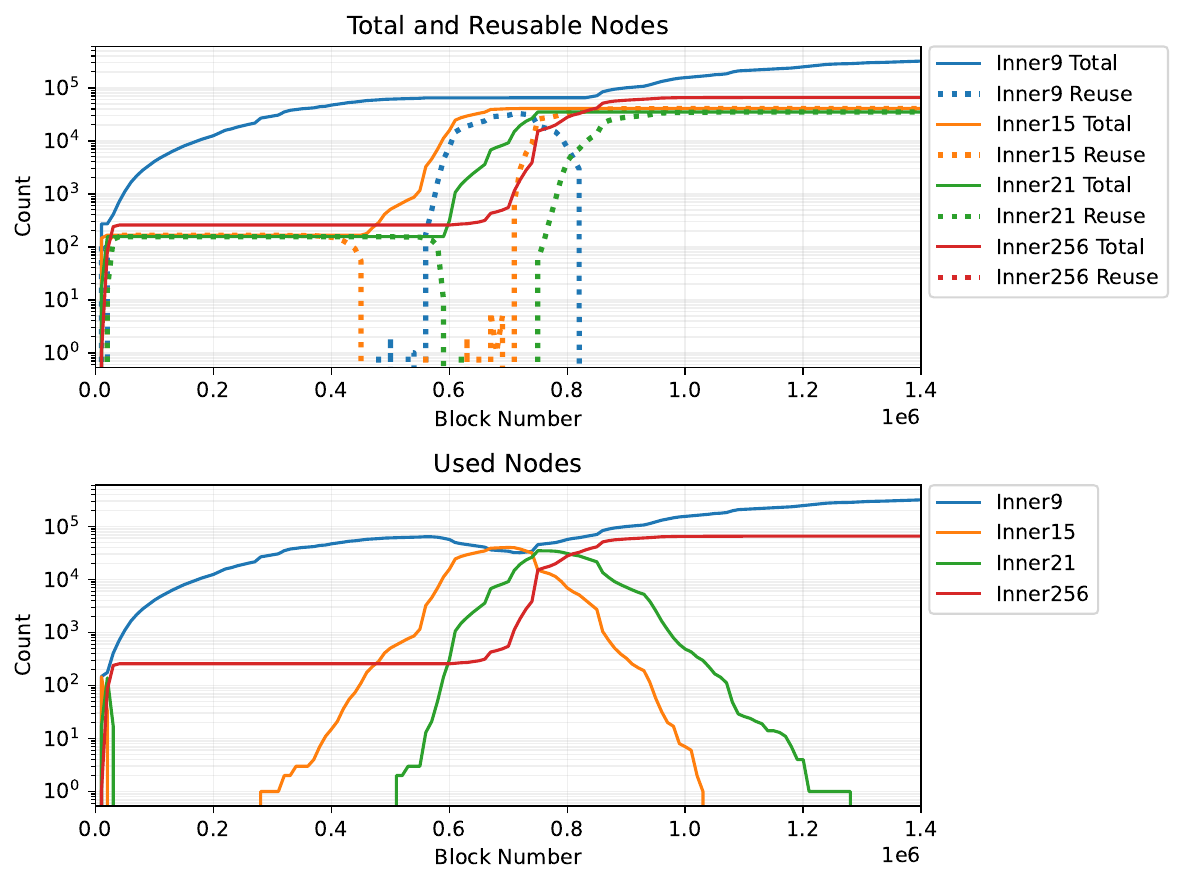}
    \caption{LiveDB: total, reusable, and used inner node counts}
    \label{fig:total-vs-reusable-node-counts}
\end{figure}

\begin{figure}[htbp]
    \centering
    \includegraphics[width=0.7\textwidth]{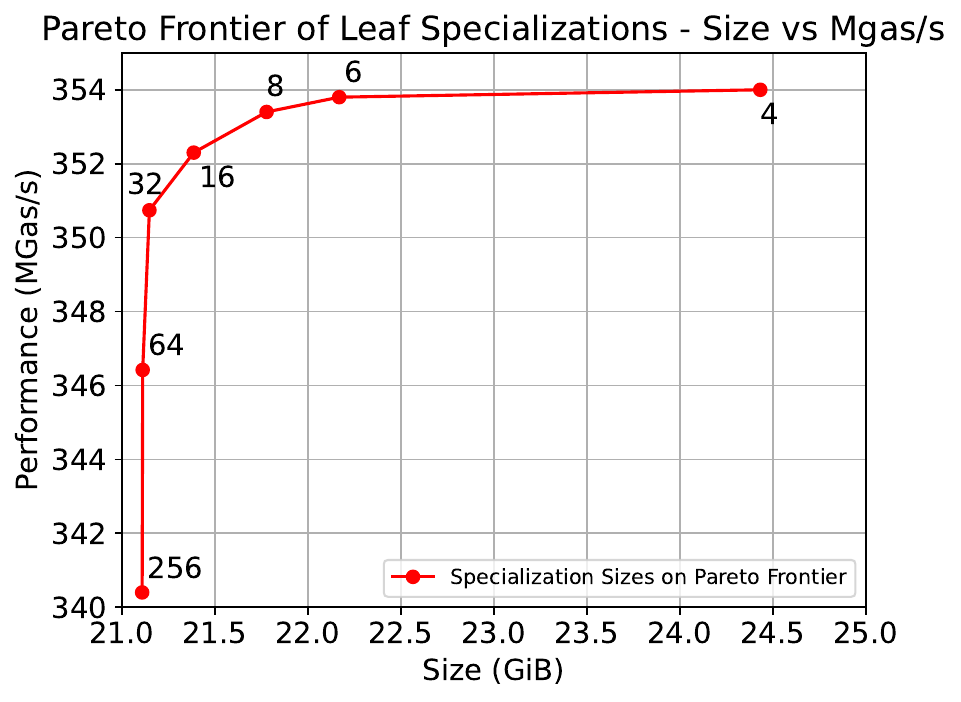}
    \caption{Storage-throughput trade-off for leaf-node specializations.
        Each point corresponds to the DP-optimal configuration for a given
        number of specializations~$k \in \{6, 8, 16, 32, 64, 128, 256\}$.  Throughput
        degrades monotonically with~$k$; storage savings beyond $k = 6$ are
        sub-percent. We select $k = 6$ as the smallest configuration at which
        additional specializations yield negligible storage improvement (see
        text).}
    \label{fig:pareto-frontier}
\end{figure}

Figure~\ref{fig:pareto-frontier} shows the measured trade-off between storage consumption and throughput as the number of leaf-node specializations~$k$ is varied, with each point corresponding to the configuration produced by the dynamic program of Section~\ref{sec:specialization} for that~$k$.
The figure isolates the throughput cost that the storage-only objective of the DP does not capture, thereby turning the choice of~$k$ into an empirical constrained optimization problem: minimize storage subject to a throughput floor imposed by Sonic's 300~ms block budget.

Two features of the trade-off curve drive the selection.
First, throughput degrades rapidly with~$k$ from~32 to~256.  We attribute this degradation to three effects that compound with specialization count: (i)~additional branch
targets in the node-access hot path, (ii)~a larger working set of type descriptors and free lists competing for L1/L2 cache, and (iii)~increased
free-list fragmentation as nodes migrate between specializations whose occupancy distributions shift over time (Figure~\ref{fig:total-vs-reusable-node-counts}).
Second, storage savings exhibit sharply diminishing returns beyond a small number of specializations.  Moving from $k = 4$ to $k = 32$ reduces storage from $24.5$ GiB to $ 21.2$~GiB, i.e., well under $1\%$ of the LiveDB footprint, while incurring several Mgas/s in throughput cost.

Under Sonic's operating constraints, throughput is the binding resource.
A validator that cannot close the 300~ms block window ceases to participate in consensus, whereas a validator with slightly higher storage consumption merely pays a marginal hardware cost.
We therefore select the smallest~$k$ at which additional specializations yield sub-percent storage improvements, which corresponds to $k = 6$ for leaf nodes.
Applying the same criterion to inner nodes yields $k = 4$.
We emphasize that this criterion is asymmetric by design: we accept a bounded storage overhead (a few hundred MiB) in exchange for protecting throughput headroom, and we would make the opposite choice on a workload where storage dominates the operating cost.

With~$k$ fixed, the dynamic program of Section~\ref{sec:specialization} selects the optimal specialization thresholds.
For~$k = 4$ inner-node specializations, the thresholds are $\{9, 15, 21, 256\}$ children; for~$k = 6$ leaf-node specializations, the thresholds are $\{1, 2, 5, 18, 146, 256\}$ values.
Both threshold sets reflect the heavy skew of the occupancy distribution: the first three or four specializations cover the overwhelming majority of nodes (Figures~\ref{fig:live-node-slot-usage} and~\ref{fig:archive-node-slot-usage}), and the final threshold at~$256$ is forced by the coverage constraint of Definition~\ref{def:specialization}.

\subsection{Throughput}

As a comparison, we used the Geth Verkle implementation.
However, the original Geth implementation of the Verkle trie runs entirely in memory; thus, the number of blocks that can be processed is limited by available memory.
Moreover, the verkle implementation has been discontinued and removed starting from Geth v1.17.0 \footnote{https://github.com/ethereum/go-ethereum/pull/33461}.
To enable a fair comparison, we extended the Geth Verkle implementation to include a LevelDB storage backend.
Our implementation \footnote{https://github.com/0xsoniclabs/carmen/tree/herbert/geth\_vt\_persistent/go/database/vt/geth2} depends directly on \texttt{go-verkle} \cite{go-verkle}, with some parts of the code such as the trie embedding ported from go-Ethereum v.16.9, and supports both Live and Archive modes.
The LevelDB backend follows the design of the go-ethereum MPT LevelDB backend closely to ensure a fair representation of the original intended implementation of the Verkle trie in Geth.
\newline

\begin{figure}[htbp]
    \centering
    \includegraphics[width=\textwidth]{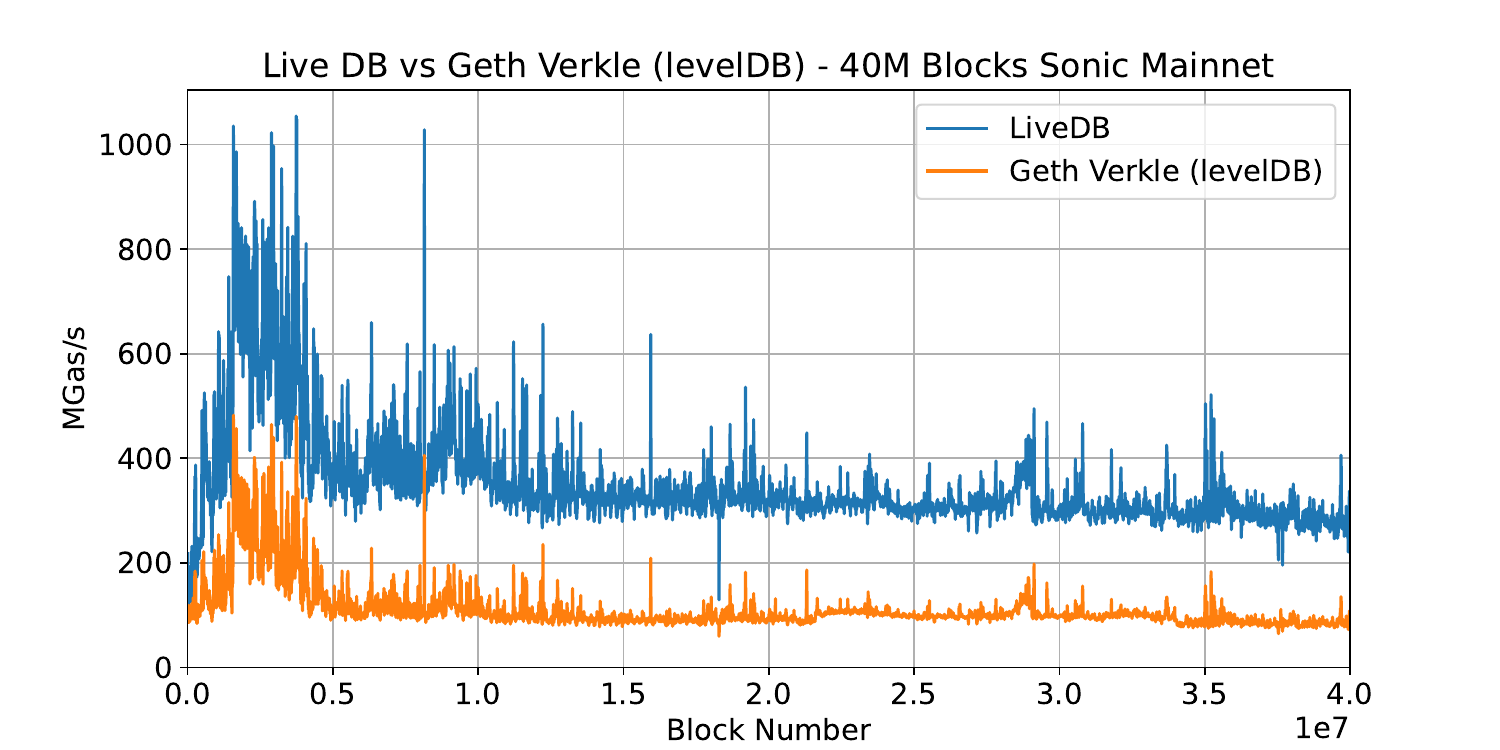}
    \caption{LiveDB vs Geth Verkle (LevelDB) performance for the first 40M blocks.}
    \label{fig:live-geth-leveldb-mgas}
\end{figure}

Figure \ref{fig:live-geth-leveldb-mgas} shows the performance comparison of the LiveDB and Geth Verkle (with LevelDB) for the first 40M blocks.
Our implementation outperforms the Geth Verkle implementation by a factor of 3.2x, taking 42h43m to process the first 40M blocks, compared to 137h53m for the Geth Verkle implementation.

\begin{figure}[htbp]
    \centering
    \includegraphics[width=\textwidth]{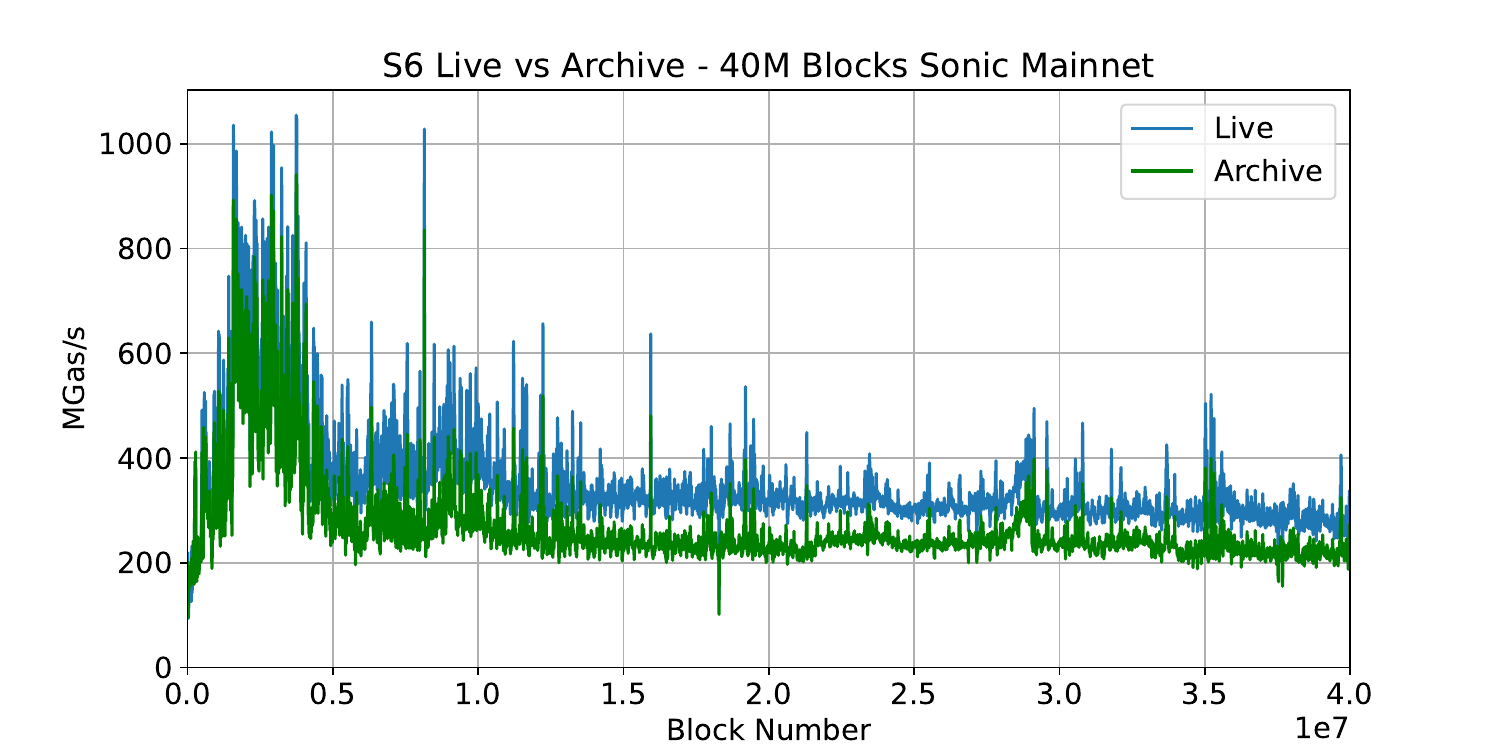}
    \caption{LiveDB vs ArchiveDb performance for the first 40M blocks.}
    \label{fig:live-archive-mgas}
\end{figure}

Figure \ref{fig:live-archive-mgas} shows the performance comparison of the LiveDB and ArchiveDb Rust implementations.
The archive implementation takes 56h12m, 24\% slower than the live implementation but still in the same order of magnitude, which is required for the archive to stay in sync with the network. \\

\begin{figure}
    \centering
    \includegraphics[width=\textwidth]{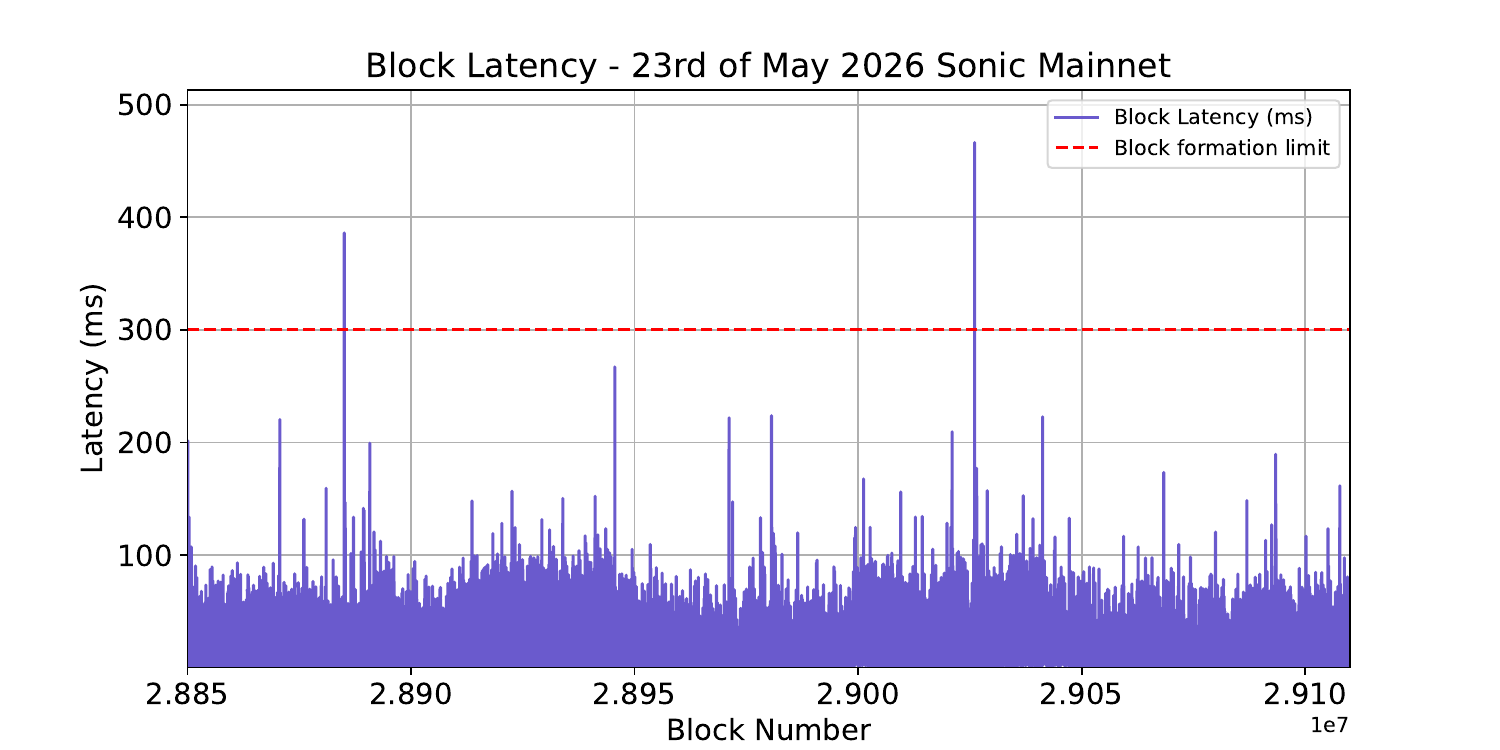}
    \caption{S6 LiveDB block latency - May 23, 2025 Sonic Mainnet}
    \label{fig:block-latency}
\end{figure}

Sonic requires blocks to be processed within a 300-millisecond window to keep up with the network's production rate.
The state database is a critical component of the block processing pipeline, and its performance directly impacts the block latency.
Figure \ref{fig:block-latency} shows the block processing latency of the LiveDB implementation for 24 hours on May 23, 2025, on the Sonic mainnet.
We chose this date because it represents the highest transaction volume in Sonic's history and therefore serves as a representative benchmark for a worst-case scenario for database performance.
The Verkle implementation operates within the required 300-millisecond block window for more than 99.99\% of the time, with only \textit{28885063}, \textit{29026048}, and \textit{29026050} exceeding the 300-millisecond block window with exceptional transaction load for the Sonic network, with up to 450ms block latency, which is still tolerable for the network.

\subsection{Disk Size}

\begin{figure}[htbp]
    \centering
    \includegraphics[width=\textwidth]{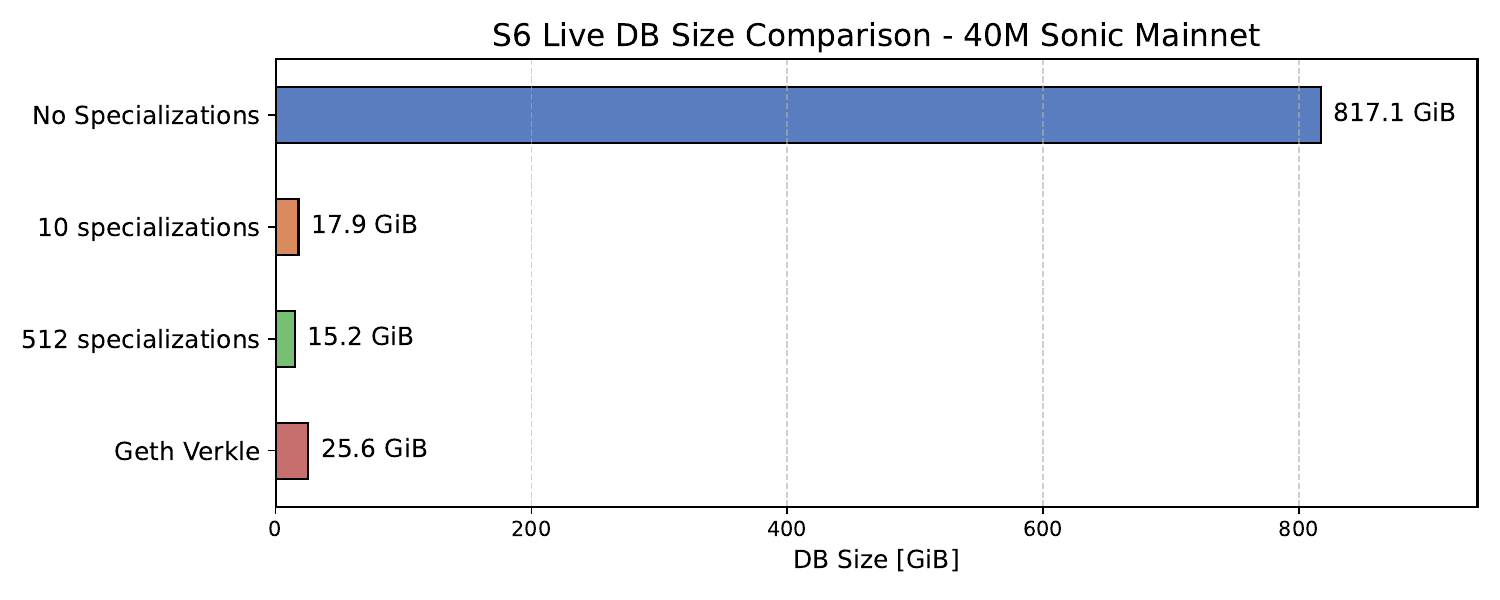}
    \caption{Comparison of LiveDb Sizes in GiB for 40M blocks}
    \label{fig:live-db-sizes}
\end{figure}

Figure \ref{fig:live-db-sizes} shows the comparison of LiveDB sizes for 40M blocks.
Using 10 specializations computed using the dynamic programming model reduces the storage space by 97.8\%, from 817.1 GiB to 17.9 GiB.
Using all 512 possible specializations reduces storage space by an additional 15\%, achieving 15.2 GiB, but introduces a significant performance overhead, as shown in Figure \ref{fig:pareto-frontier}.
On the other hand, Geth Verkle uses 25.6 GiB of storage, which is 40\% larger than our Rust file implementation.
It is worth noting that the Geth Verkle LevelDB backend compresses the stored data, while our file implementation does not.

\begin{figure}[htbp]
    \centering
    \includegraphics[width=\textwidth]{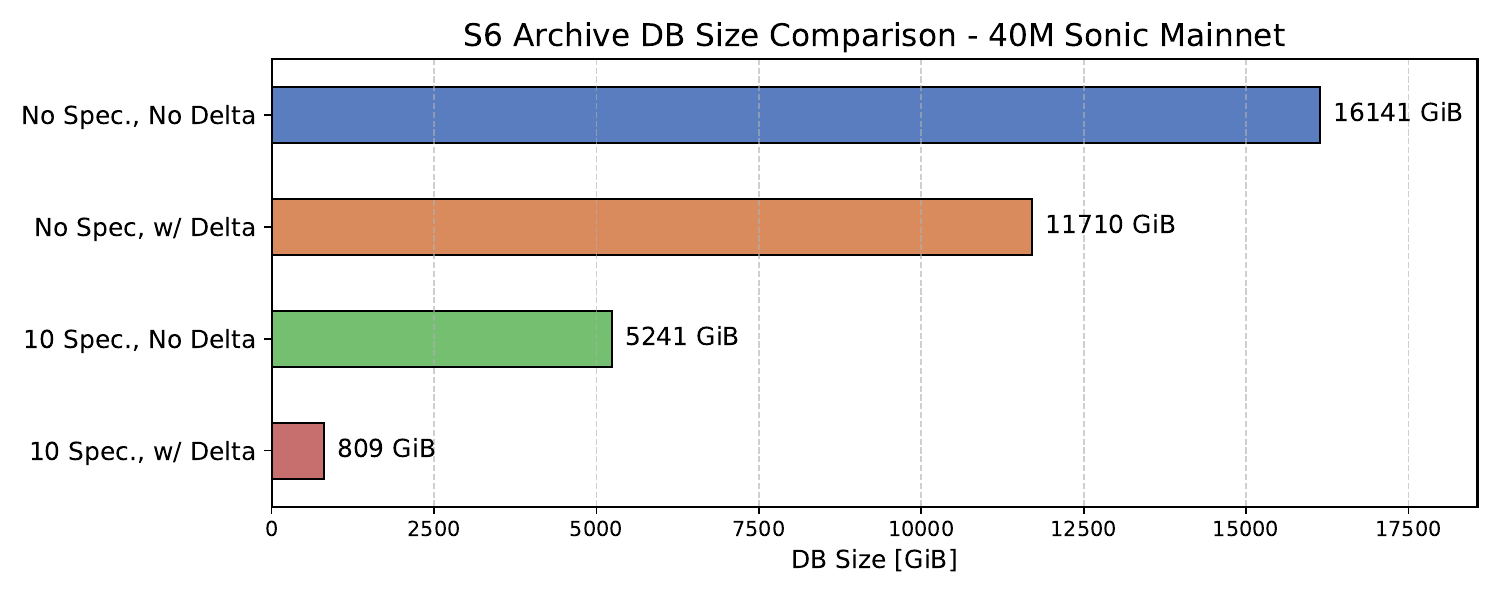}
    \caption{Comparison of Archive Sizes in GiB for 40M blocks}
    \label{fig:archive-db-sizes}
\end{figure}

Figure \ref{fig:archive-db-sizes} shows a comparison of ArchiveDB sizes for 40M blocks.
Here, we have used two combinations of optimizations: node specializations and delta nodes.
We can see that the two optimizations are complementary and both contribute to the overall reduction in size.
The delta node alone reduces the size by 25.48\%, while the node specializations alone reduce the size by 67.53\%. The combination of the two optimizations reduces the size by 94.99\%, going from 16141 GiB to 809 GiB.

\begin{figure}[htbp]
    \centering
    \includegraphics[width=\textwidth]{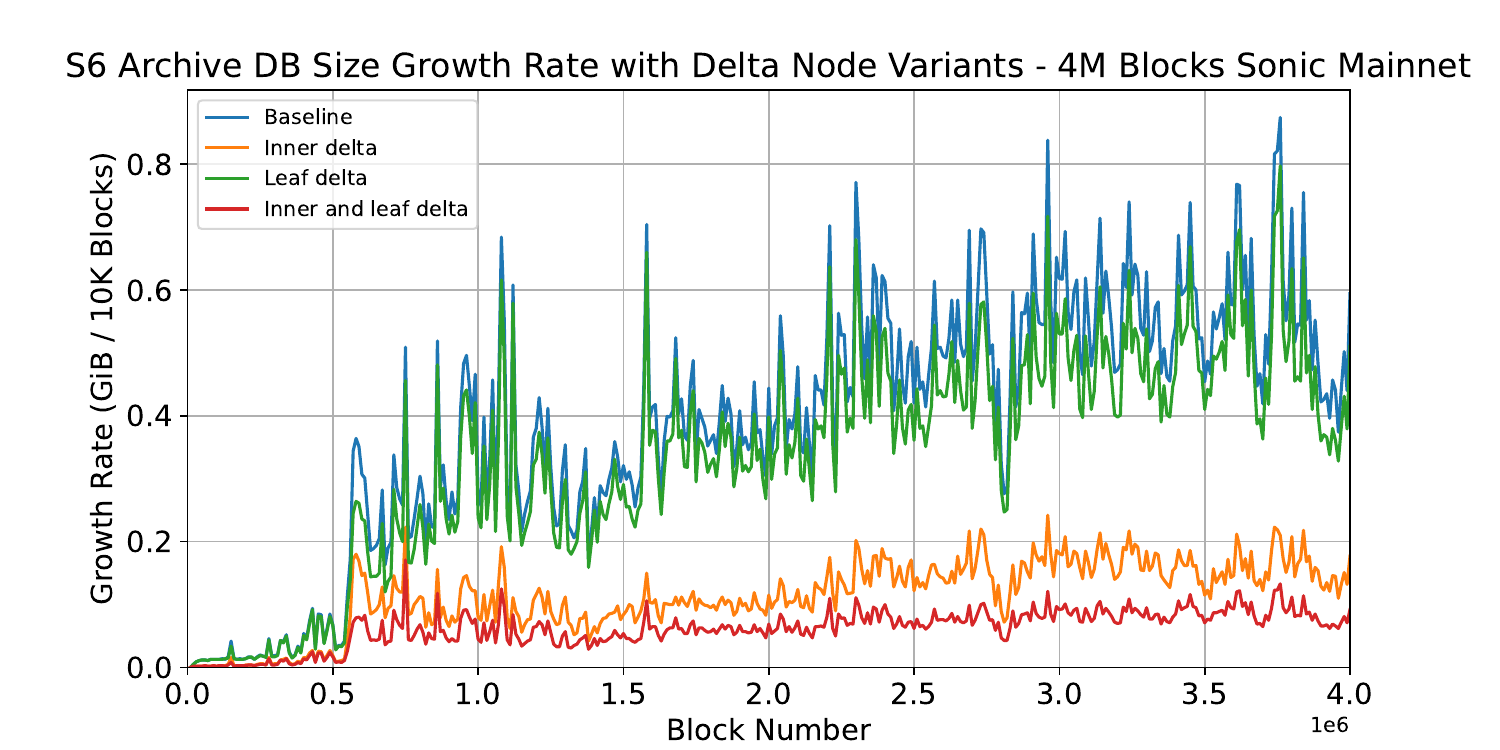}
    \caption{Archive DB growth rate with different combinations of delta node implementations.}
    \label{fig:archive-db-growth-rate-delta-comparison}
\end{figure}

As shown before, the delta nodes are responsible for a significant reduction in the overall size of the archive database.
To better understand the impact of delta nodes on the archive database's growth rate, we can examine how different combinations of delta node implementations affect it.
Figure \ref{fig:archive-db-growth-rate-delta-comparison} shows the growth rate of the archive database for the first 4M blocks with different combinations of delta node implementations.
The inner delta nodes have the greatest impact on the archive database's growth rate, as they grow quickly due to ArchiveDB's copy-on-write nature.

\section{Related Work\label{sec:related}}

\paragraph{Authenticated storage for blockchain state.} Raju et al. \cite{raju2018mlsm} propose to use merkelized log-structured merge trees (mLSM) as the storage backend for Ethereum.
Their main objectives are to reduce read and write amplification while still providing authenticated storage.
To this end, they build on LSM trees, using immutable binary Merkle Tries instead of SSTables for authenticated storage.
The root hashes of all sub-trees are combined into a single root hash.
They maintain a lookup cache for each level of the LSM tree to speed up reads, and a bloom filter for each sub-tree to quickly determine whether a key may be present in that sub-tree.
Furthermore, they batch updates in memory and write them as a single contiguous chunk.

Reads of frequently used keys are served from the cache.
For uncached keys, bloom filters help quickly determine whether a key may be in a sub-tree and avoid unnecessary tree traversals.
This means a cache hit exhibits a worst-case runtime complexity of $\mathcal{O}(1)$ disk operations.
A cache miss has an amortized complexity of $\mathcal{O}(\log N)$, where $N$ is the number of nodes in the sub-tree containing the key (amortized because Bloom filters may produce false positives, which would require multiple tree traversals).
And because there are multiple sub-trees, each containing only a fraction of the total keys, the depth of each sub-tree is smaller than that of a single large tree containing all keys.

Insertions and updates are batched and written out as a new Merkle Trie.
This allows for one contiguous write operation, which is more efficient than many random writes.
Creating a new sub-tree may trigger compactions of existing sub-trees.
However, each update is guaranteed to be written only once to each level of the LSM tree.
This means in the worst case, there are $\mathcal{O}(H)$ writes for each update, where $H$ is the number of levels in the LSM tree.

One challenge with authenticated storage is that the LSM tree's compaction algorithm must be deterministic.
However, this is only an engineering challenge.

Choi et al. proposed the Layered Merkle Patricia Trie (LMPT) \cite{LMPT}, a high-performance data structure for transaction processing systems.
The main improvement over MPT is to split the trie into three smaller tries: a flat key-value store, a snapshot MPT serialized on disk, and an intermediate and a delta MPT in memory, serving as L2 and L1 caching layers, respectively.
Reading operations follow a hierarchical organization, first in the Delta MPT, then in the Intermediate MPT, and finally in the storage, thereby significantly reducing I/O amplification.
LMPT maintains both a flat key-value and an MPT view over the stored data, enabling fast data retrieval for non-authenticated reads with the former and authenticated reads with the latter.
Writes are stored on the delta MPT, while a background thread merges the intermediate MPT into the underlying storage, allowing non-blocking updates.
Upon reaching a specified threshold, the intermediate MPT is flushed to storage and replaced with the actual delta MPT, leaving the higher cache level empty.
Compared with our approach, LMPT is significantly more difficult to implement because it requires maintaining and synchronizing three MPTs.

Yang et al. \cite{yang2025salt} propose SALT, which reduces the read and write amplification incurred by authenticated reads when using a pure tree data structure and improves the storage efficiency for the lower sparse levels of the tree.
They use a combination of a complete tree for the upper levels and flat buckets for the lower levels.
Each bucket represents one sub-tree, but flattens all nodes into a strongly history-independent hash table.
This way, the sparsity is no longer a problem.
Additionally, there is now only a single commitment for each bucket, in addition to the commitments for the nodes in the upper part.
The upper part of the tree and the commitments of the buckets are small enough to reside in memory.
This means that authentication does not require any disk accesses.
Since the upper part of the tree is in memory and hash maps allow point queries, both read and write operations require only a single disk access (except for bucket resizing).

NOMT (Nearly-Optimal Merkle Trie) \cite{NOMT} is a merkelized database optimized for commodity hardware.
In particular, it is designed to exploit modern SSD performance by packing data into 4KB pages to maximize SSD bandwidth and employing several techniques to hide latency, e.g., batching and DMA.
NOMT is composed of two main components: Bitbox, a Merkle page store used for authenticated reads, and Beatree, a raw key-value store implemented as a custom B-Tree for serving state reads in 1 I/O.
Each page stores a rootless B-tree of depth 6, allowing retrieval of $2^6$ nodes in a single I/O.
In addition, NOMT exploits DMA CPU idle time to generate witness proofs during fetch operations.
However, the high latency of SSDs necessitates batching queries, a latency-hiding technique that requires a parallel EVM.
In addition, it is unclear how an archive could be implemented, as it would require page duplication to keep reads aligned, incurring up to 4KB of 32-byte memory overhead for a single-node update.

LVMT~\cite{lvmt} uses authenticated multi-point evaluation trees (AMT)~\cite{tomescu2020towards} as the underlying commitment scheme for an authenticated storage data structure.
AMTs are an extension of KZG polynomial commitments~\cite{kate2010constant} that offer faster proof generation at the cost of a logarithmic (instead of constant) proof size and the need to maintain additional proof-generation metadata.
By storing \textit{key-version} pairs within a hierarchy of AMTs, LVMT avoids costly elliptic curve multiplications on updates, as each value update increments the version by one.
Values are stored as \textit{key-version-value} triples in a separate append-only MPT that is created for each block.
While LVMT has been shown to boost blockchain system throughput on real Ethereum traces by up to 2.7x, its multi-tree architecture introduces significant complexity for both implementation and proof verification, and the large amounts of required metadata necessitate sharding techniques to achieve good performance.

\paragraph{Delta Nodes.}
The idea of representing a versioned update as a small set of modifications against a base version, with a promotion step once a capacity threshold is exceeded, originates in the persistent
data-structure literature.
Driscoll, Sarnak, Sleator, and Tarjan~\cite{driscoll1989persistent} introduced the \emph{fat-node method with modification boxes}, where each node carries a fixed-capacity log of timestamped modifications and is split into a fresh copy once the log fills.
Our delta nodes are a disk-resident, coarse-grained adaptation: the threshold~$\tau$ serves as the fat-node capacity, and the base node serves as the node being modified.
The adaptation targets a different regime — disk-resident storage, block-height-indexed queries, a non-forking version sequence, and a hard per-query I/O budget — which motivates the design choices
above.
The Bw-tree~\cite{levandoski2013bwtree} applies a related delta-based strategy to a latch-free B-tree: page updates are prepended to a delta chain and periodically consolidated into a new base page.
Unlike SonicDB, the Bw-tree permits arbitrarily long chains between consolidations and bounds their length only in expectation; the difference reflects its target (in-memory OLTP latency) rather than on-disk archive footprint.
LSM trees~\cite{oneil1996lsm} buffer updates at internal nodes and flush them down in batches, a pattern architecturally distinct from base-plus-delta but motivated by the same desire to avoid full-node rewrites on small updates.

\section{Conclusion\label{sec:conclusion}}

We presented SonicDB S6, a production Rust implementation of a Verkle trie state database for the Sonic blockchain.
Motivated by Sonic's 300-millisecond block time and the need to serve both live and historical state queries in real time, we developed three principal contributions.
First, occupancy-aware node specializations with a subsumption property, optimally selected via an $\mathcal{O}(kn^2)$ dynamic programming algorithm, reduce live database storage by 97.8\%, from 817.1 GiB to 17.9 GiB across 40 million blocks of Sonic history.
Second, delta nodes that store only changed slots relative to a base node, without chaining, reduce archive storage by 95\%, from 16,141 GiB to 809 GiB, making it feasible to operate a full archive node on commodity hardware. Third, a layered architecture that combines batched trie updates, multi-threaded commitment computation via a work-stealing thread pool, and homomorphic caching of intermediate Pedersen commitment state achieves $3.2\times$ higher throughput than a persistent Geth Verkle baseline while maintaining archive performance within production block-rate requirements.

\section*{Acknowledgment}
We would like to thank Philip Salzmann for the implementation contributions for S6 and the Sonic Labs team. 

\bibliographystyle{plain}
\bibliography{main}

\end{document}